\definecolor{corlinks}{RGB}{0,0,150}
\definecolor{cormenu}{RGB}{0,0,150}
\definecolor{corurl}{RGB}{0,0,150}
\newtheorem{theorem}{Theorem}
\newtheorem{lemma}{Lemma}
\newtheorem{corollary}{Corollary}
\newtheorem{remark}{Remark}
\newtheorem{claim}{Claim}
\newcommand{\eqdef}{\stackrel{\rm def}{=}}
\def\colorful{0}
\newcommand{\blue}[1]{{\color{blue} {#1}}}
\newcommand{\blue}[1]{{{#1}}}
\begin{document}

\title{On monotone circuits with local oracles and clique lower bounds\\ \vspace{0.65cm}}

\author{Jan Kraj\'{\i}\v{c}ek   \and  Igor C. Oliveira}

\date{{\small Faculty of Mathematics and Physics}\\{\small
Charles University in Prague}\\~\\ \normalsize \vspace{-0.6cm} ~\\~\\ \today}

\maketitle

\begin{abstract}
We investigate monotone circuits with local oracles [K., 2016], i.e., circuits containing additional inputs $y_i = y_i(\vec{x})$ that can perform unstructured computations on the input string $\vec{x}$. Let $\mu \in [0,1]$ be the locality of the circuit, a parameter that bounds the combined strength of the oracle functions $y_i(\vec{x})$, and $U_{n,k}, V_{n,k} \subseteq \{0,1\}^m$ be the set of \blue{$k$-cliques and the set of complete $(k-1)$-partite graphs, respectively (similarly to [Razborov, 1985]).} Our results can be informally stated as follows. 
\begin{itemize}
\item[(\emph{i})] For an appropriate extension of depth-$2$ monotone circuits with local oracles, we show that the size of the smallest circuits separating $U_{n,3}$ (triangles) and $V_{n,3}$ (complete bipartite graphs) undergoes two phase transitions according to $\mu$. 
\item[(\emph{ii})] For $5 \leq k(n) \leq n^{1/4}$, arbitrary depth, and $\mu \leq 1/50$, we prove that the monotone circuit size complexity of separating the sets $U_{n,k}$ and $V_{n,k}$ is $n^{\Theta(\sqrt{k})}$, under a certain restrictive assumption on the local oracle gates. 
\end{itemize}

The second result, which concerns  monotone circuits with restricted  oracles, extends and provides a matching upper bound  for the  exponential lower bounds on the monotone circuit size complexity of $k$-clique obtained by Alon and Boppana (1987). 
\end{abstract}

\section{Introduction and motivation}

We establish initial lower bounds on the power of monotone circuits with local oracles (monotone CLOs), an extension of monotone circuits introduced in \citep{1611.08680}  motivated by  problems in proof complexity. Interestingly, while the model has been conceived as part of an approach to establish new length-of-proofs lower bounds,  our results indicate that investigating such circuits can benefit our understanding of classical results obtained in the usual setting of monotone circuit complexity, where no oracle gates are present \blue{(see the discussion on the Alon-Boppana exponential lower bounds for $k$-clique \citep{DBLP:journals/combinatorica/AlonB87} presented later in this section)}. 

Before describing the circuit model and our contributions in more detail, which require no background in proof complexity, we explain the main motivation that triggered our investigations.\\

\noindent \textbf{Relation to proof complexity.} A major open problem in proof complexity is to obtain lower bounds on proof length in $F_d[\oplus]$, depth-$d$ Frege systems extended with parity connectives (cf.~\citep{krajicek}). It is known that strong enough lower bounds for $F_{3}[\oplus]$, the depth-$3$ version of this system, imply related lower bounds for each system $F_d[\oplus]$, where $d \in \mathbb{N}$ is arbitrary \citep{buss2015collapsing}. A natural restriction of $F_{3}[\oplus]$ for which proving general lower bounds is still open is the proof system R$(\mathsf{Lin}/\mathbb{F}_2)$ (cf.~\citep{ItsykonSokolov}, \citep{1611.08680}). It corresponds to an extension of Resolution where clauses involve linear functions over $\mathbb{F}_2$.\footnote{Lower bonds for tree-like R$(\mathsf{Lin}/\mathbb{F}_2)$-proofs were  established in \citep{ItsykonSokolov}.} 

In order to attack this and other related problems, \citep{1611.08680} proposed a generalization of the feasible interpolation method to randomized feasible interpolation. Among other results, \citep{1611.08680} established that lower bounds on the size of monotone circuits with local oracles separating the sets $U_{n,k}$ and $V_{n,k}$ (defined below) imply lower bounds on the size of general (dag-like) R$(\mathsf{Lin}/\mathbb{F}_2)$ proofs. In addition, it was shown that strong lower bounds in the new circuit model would provide a unifying approach to important length-of-proofs lower bounds established via feasible interpolation (cf.~\citep[Section 6]{1611.08680}, \citep{pudlak1997lower}).\\

Motivated by these connections and by the important role of feasible interpolation in proof complexity, we start in this work a more in-depth investigation of the power and limitations of monotone circuits with local oracles. We focus on the complexity of the $k$-clique problem over the classical sets of negative and positive instances considered in monotone circuit complexity \citep{razborov1985lower, DBLP:journals/combinatorica/AlonB87}. While the monotone complexity of $k$-clique has been investigated over other input distributions of interest (cf.~\citep{DBLP:journals/siamcomp/Rossman14}), we remark that the structure of these instances is particularly useful in proof complexity (cf.~\citep{Krajicek97,  pudlak1997lower, DBLP:journals/jsyml/BonetPR97}). The corresponding tautologies have appeared in several other works.

We provide next a brief introduction to the circuit model and to the set of instances of $k$-clique that are relevant to our results.\\

\noindent \textbf{An extension of monotone circuits.} A monotone circuit with local oracles $C(\vec{x}, \vec{y})$ is a monotone boolean circuit containing extra inputs $y_j$ (local oracles) that compute an arbitrary \emph{monotone} function of $\vec{x}$. In order to limit the power of these oracles, there is a locality parameter $\mu \in [0,1]$ that controls the sets of positive and negative inputs on which the inputs $y_i$ can be helpful. In more detail, we consider circuits computing a monotone function $f \colon \{0,1\}^m \to \{0,1\}$, and associate to each input $y_i$ a rectangle $U_i \times V_i$, with $U_i \subseteq f^{-1}(1)$ and $V_i \subseteq f^{-1}(0)$. We restrict attention to sets of rectangles whose union have measure at most $\mu$ according to an appropriate distribution $\mathcal{D}$ that depends on $f$. We are guaranteed that $y_i(U_i) = 1$ and $y_i(V_i) = 0$ but, crucially, the computation of $C(\vec{x}, \vec{y})$ must be correct no matter the interpretation of each $y_i$ outside its designated sets $U_i$ and $V_i$.\\

\noindent \textbf{The $k$-clique function and the sets $U_{n,k}$ and $V_{n,k}$.} We focus on the monotone boolean function $f \colon \{0,1\}^m \to \{0,1\}$ that outputs $1$ on an $n$-vertex graph $G \in \{0,1\}^m$ if and only if it contains a clique of size $k$, where $m = \binom{n}{2}$. More specifically, we investigate its complexity as a partial boolean function over $U_{n,k} \cup V_{n,k}$, where $U_{n,k}$ is the set of inputs corresponding to $k$-cliques over the set $[n]$ of vertices, and $V_{n,k}$ is \blue{the set of complete $\zeta$-partite graphs over $[n]$, where $\zeta = k-1$.} Roughly speaking, for this choice of $f$, we measure the size of a subset $\mathcal{B}  \subseteq U_{n,k} \times V_{n,k}$ using the product distribution obtained from the uniform distribution over the $k$-cliques in $U_{n,k}$, and the distribution supported over $V_{n,k}$ obtained by sampling a random coloring $\chi \colon [n] \to [k-1]$ of $[n]$ using \blue{exactly $\zeta = k-1$ colors, and considering the associated complete $\zeta$-partite graph $G(\chi)$.}\footnote{\blue{Some authors consider as negative instances the larger set of complete $\zeta$-partite graphs where $\zeta$ ranges from $1$ to $k-1$. For technical reasons, we work with exactly $(k-1)$-partite graphs (cf.~Claim \ref{c:Fmonotone}). In most lower bound contexts this is inessential, as a random coloring $\chi \colon [n] \to [k-1]$ under a bounded $k(n)$ contains non-empty color classes except with an exponentially small probability.}}\\

A more rigorous treatment of the circuit model and of the problem investigated in our work appears in Section \ref{s:notation}.

\subsection{Our Results}

We observe a phase transition for an extension of depth-2 monotone circuits with local oracles that separate triangles from complete bipartite graphs. 

\begin{theorem}[Phase transitions in depth-2]\label{t:thm1}
Let $s = s(n,\mu)$ be the minimum size of a depth-$2$ monotone circuit \emph{(DNF)} on inputs $\vec{x}$, $y_i(\vec{x})$, and $g_j(\vec{y})$ that separates $U_{n,3}$ and $V_{n,3}$, where the $y$-inputs have locality $\leq \mu$, and each $g_j$ is an arbitrary monotone function on $\vec{y}$. Then, for every $\varepsilon > 0$,
 \begin{equation}\nonumber
    s \;=\;
    \begin{cases}
      ~1 & \text{if}~~\mu = 1, \\
      ~\Theta_\varepsilon(n^2) & \text{if}~~1/2 + \varepsilon \;\leq\; \mu \;\leq\; 1 - \varepsilon, \\
            ~\Theta_\varepsilon(n^3) & \text{if}~~0 \;\leq\; \mu \;\leq\; 1/2 - \varepsilon.
    \end{cases}
  \end{equation}
Furthermore, the upper bounds on $s(n,\mu)$ do not require the extra inputs $g_j(\vec{y})$.
\end{theorem}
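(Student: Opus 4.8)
The plan is to analyze the three regimes separately, noting that the trivial case $\mu = 1$ allows a single oracle $y_1$ whose rectangle covers (almost) all of $U_{n,3} \times V_{n,3}$, so the circuit is just the single gate $y_1$. For the remaining two regimes I would first pin down the relevant distribution $\mathcal{D}$ on $U_{n,3} \times V_{n,3}$: a uniform triangle $T$ on the positive side, and on the negative side a complete bipartite graph $G(\chi)$ obtained from a uniformly random $2$-coloring $\chi \colon [n] \to \{1,2\}$ (the $\zeta = 2$ case, which dominates). The key combinatorial fact I would extract is: a triangle $T$ and a bipartite graph $G(\chi)$ are \emph{not} separated by an edge-indicator $x_e$ precisely when $e \in T$ but $e$ is monochromatic under $\chi$; and the probability, over random $T$ and random $\chi$, that $T$ is entirely monochromatic (hence agrees with some fixed vertex pair being non-separated) is a constant — roughly $1/4$, since each of the three edges of $T$ is monochromatic with probability $1/2$ but the events are correlated through the three vertices, giving $\Pr[\chi \text{ constant on } T] = 2 \cdot (1/2)^3 = 1/4$. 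This constant $1/4$ is what produces the threshold at $\mu = 1/2$.

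**The $\mu \le 1/2 - \varepsilon$ regime: lower bound $\Omega(n^3)$.**
Here I would argue the oracles are too weak to help, so the circuit essentially reduces to an ordinary monotone DNF separating $U_{n,3}$ from $V_{n,3}$, whose size is $\Theta(n^3)$ by the standard argument (each triangle needs a term, and a single monotone term — an AND of $x_e$'s and $y_i$'s — that accepts a triangle $T$ must, to reject all bipartite graphs, force enough structure; counting gives $\Omega(n^3)$ terms). The oracle-handling step is the crux: a term containing $y_i$ is only guaranteed to behave as expected on $U_i \times V_i$; outside it, $y_i$ may be $0$, killing the term on positive inputs, or $1$, making the term useless on negative inputs. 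Since $\sum_i \mathcal{D}(U_i \times V_i) \le \mu \le 1/2 - \varepsilon$, a $(1/2 + \varepsilon)$-fraction of positive/negative pairs must be separated by $x$-variables alone, and I would run the counting lower bound restricted to that fraction. The main obstacle is making ``restricted to that fraction'' precise: one must show that a $\mu$-fraction of the $(T,\chi)$ pairs cannot absorb an $n^3$-sized chunk of the separation burden — this is where the constant $1/4$ versus $1/2$ gap matters, and I would formalize it via an averaging/double-counting argument over which triangles are ``covered'' by oracle rectangles.

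**The $1/2 + \varepsilon \le \mu \le 1 - \varepsilon$ regime: $\Theta(n^2)$.**
For the upper bound I would exhibit an explicit construction: use oracle inputs $y_1, \dots, y_n$ where $y_v$ is the monotone function ``vertex $v$ has degree $\ge 2$ in a way witnessing a triangle through $v$'' — more carefully, design $O(n^2)$ oracle functions, one per candidate edge or vertex pair, with rectangles chosen so their union has measure just below $\mu$ but the surviving positive/negative pairs are separable by an $O(n^2)$-size DNF over $\vec{x}$ and the $y$'s. The clean version: since $\mu > 1/2$, we can afford to ``oracle away'' all pairs where the triangle is monochromatic, and the remaining pairs (triangle meets both color classes) are separated by noticing some edge of the triangle is bichromatic — encodable with $n$ oracle bits for the coloring-type information plus an $O(n^2)$ DNF. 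The lower bound $\Omega(n^2)$ follows because even with measure-$\mu < 1$ of pairs handled by oracles, a constant fraction of positive inputs remain ``live'' and each still needs its own term, and a term accepting a triangle but rejecting the live bipartite graphs must specify $\Omega(1)$ of the triangle's vertices, yielding $\Omega(n^2)$ by the same counting as before but now against the restricted negative set. The hard part throughout will be the bookkeeping that matches the oracle measure budget against the fraction of separation tasks offloaded — the $g_j(\vec{y})$ gates complicate the lower bound since they can post-process oracle outputs arbitrarily, so I would need a switching-type or adversary argument showing that, as long as the total rectangle measure is $< \mu$, an adversary can set the uncovered $y_i$'s to $0$ on positives (resp.\ $1$ on negatives) and thereby reduce any such circuit to a pure monotone $\vec{x}$-circuit on the uncovered pairs, reinstating the classical counting bounds with the thresholds $1/4 \mapsto 1/2$ rescaling.
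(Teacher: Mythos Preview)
Your case split and overall strategy match the paper's, but several key mechanisms are misidentified or missing.

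\textbf{The source of the $1/2$ threshold.} It is \emph{not} the probability $1/4$ that a random triangle is monochromatic under $\chi$; your sentence ``this constant $1/4$ is what produces the threshold at $\mu = 1/2$'' has no visible mechanism. The relevant quantity is $\Pr_\chi[\chi(i) \neq \chi(j)] \approx 1/2$ for a \emph{single} pair $\{i,j\}$. In the paper's lower bound for $\mu \le 1/2 - \varepsilon$, a DNF term whose $x$-part spans at most two vertices accepts roughly half of $V_{n,3}$ (a single edge is present in $G(\chi)$ with probability $\approx 1/2$); hence the oracle in that term must reject a $\ge 1/2 - O(\varepsilon)$ fraction of negatives, forcing its $V$-set large. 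Combined with $\mu \le 1/2 - \varepsilon$, this caps the fraction of positives such terms can cover, so $\Omega(n^3)$ three-vertex terms are required. The $\Omega(n^2)$ lower bound in the middle regime is simpler: terms with $|S_j|=0$ (pure-oracle terms) must have $V'_j = V_{n,3}$ to avoid accepting a negative, so each triangle they accept costs $1/\binom{n}{3}$ in locality; with $\mu \le 1-\varepsilon$ they cover at most $(1-\varepsilon)\binom{n}{3}$ triangles, and the remaining $\Omega(n^3)$ triangles are split among terms with $|S_j|\ge 1$, each of which accepts at most $n$ triangles.

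\textbf{The upper bound construction.} The paper's circuit for the middle regime is the explicit $\bigvee_{i<j}(x_{\{i,j\}} \wedge y_{\{i,j\}})$, with $U_{\{i,j\}}$ the triangles whose two \emph{smallest} vertices are $i,j$ (so the $U$-sets are pairwise disjoint) and $V_{\{i,j\}}$ the bipartite graphs with $i,j$ in different parts. Locality is then exactly the edge-bichromatic probability $\approx 1/2$. Your sketch (``$y_v$ for each vertex,'' ``oracle away monochromatic triangles'') neither specifies rectangles nor argues correctness against \emph{all} monotone separating functions, which is the actual correctness requirement for a CLO.

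\textbf{Handling the $g_j(\vec y)$ gates.} This is the largest gap. The paper avoids any switching or adversary argument by fixing the extremal instantiation $\mathcal{F}^\star$, where $f^\star_{(U_i,V_i)}(a) = 1$ iff $a \in U_i \cup (V \setminus V_i)$. The key closure property (Lemma~\ref{l:lemmaK}) is
\[
f^\star_{(U_i,V_i)} \wedge f^\star_{(U_j,V_j)} = f^\star_{(U_i \cap U_j,\; V_i \cup V_j)}, \qquad
f^\star_{(U_i,V_i)} \vee f^\star_{(U_j,V_j)} = f^\star_{(U_i \cup U_j,\; V_i \cap V_j)},
\]
with the new rectangle still contained in $\mathcal{B} = \bigcup_i U_i \times V_i$. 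This instantly collapses any monotone $g_j(\vec y)$ to a single $f^\star$-variable without increasing locality, and reduces every DNF term to the form $\bigwedge_{e \in S} x_e \wedge f^\star_{(U',V')}$ with $U' \times V' \subseteq \mathcal{B}$; both lower bounds are then pure counting on $|S|$. Your proposed adversary (set uncovered $y_i$'s to $0$ on positives and $1$ on negatives) is not a single consistent monotone instantiation of the $y_i$'s, so it does not yield a well-defined reduced circuit and would not, as stated, handle the $g_j$ layer.
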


\blue{Observe that the lower bounds remain valid in the presence of the functions $g_j(\vec{y})$. In other words, in the restricted setting of depth-2 circuits, a small locality parameter does not help, even if arbitrary monotone computations that depend on the output of the local oracle gates are allowed in the circuit. (As explained in Section \ref{s:thm1}, the monotone functions $g_j(\vec{y})$ can be handled in a generic way, and add no power to the model.)} 
 
The proof of Theorem \ref{t:thm1} is presented in Section \ref{s:thm1}.  The argument considers different bottlenecks in the computation based on the value of $\mu$. In our opinion, the main conceptual message of Theorem \ref{t:thm1} is that an interesting complexity-theoretic behavior appears already at depth two. Indeed, the oracle gates can interact with the standard input variables in unexpected ways, and the main difficulty when analyzing general monotone CLOs is the arbitrary nature of these gates, which are limited only by the locality parameter.\footnote{It is plausible that the analysis behind the proof of Theorem \ref{t:thm1} extends to larger $k$, but we have not pursued this direction in the context of depth-2 circuits. See also the related discussion on Section \ref{s:concluding}.}

We obtain stronger results for larger $k = k(n)$ and with respect to unrestricted monotone circuits (i.e., arbitrary depth), but our approach requires an extra condition on the set of rectangles that appear in the definition of the oracle gates. \blue{Our assumption, denoted by $\mathcal{A}_d$, says that} if each oracle variable $y_i$ is associated to the rectangle $U_i \times V_i$, then the intersection of every collection of $d+1$ sets $U_i$ is empty.

\begin{theorem}[Upper and lower bounds for monotone circuits with restricted oracles]\label{t:thm2}~\\ 
For every $k = k(n)$ satisfying $5 \leq k \leq n^{1/4}$, the following holds.
\begin{itemize}
\item[\emph{1.}] If $D(\vec{x}, \vec{y})$ is a monotone circuit with local oracles that separates $U_{n,k}$ and $V_{n,k}$ and its $y$-variables have locality $\mu \leq 1/16$ and satisfy condition $\mathcal{A}_d$, then $\mathsf{size}(D) = n^{\Omega(\sqrt{k}/d)}$.
\item[\emph{2.}] For every $\varepsilon > 0$, there exists a monotone circuit with local oracles $C(\vec{x}, \vec{y})$ of size $n^{O_\varepsilon(\sqrt{k})}$ separating $U_{n,k}$ and $V_{n,k}$ whose $y$-variables have locality $\mu \leq \varepsilon$ and satisfy condition $\mathcal{A}_1$.
\end{itemize}
\end{theorem}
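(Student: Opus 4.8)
The plan is to handle the two parts separately, since they require genuinely different ideas. For the lower bound (part 1), I would adapt the Alon–Boppana \citep{DBLP:journals/combinatorica/AlonB87} sharpening of Razborov's approximation method to the CLO setting. The key observation is that condition $A_d$ forces each positive input (a $k$-clique) to satisfy at most $d$ of the oracle variables $y_i$, so one can ``guess'' the at-most-$d$ oracle gates that fire on a given clique. The strategy is: first, fix a worst-case assignment to the $y$-variables consistent with the rectangles (for the negative inputs $V_{n,k}$ we may set the relevant $y_i$ to $0$, so negative instances are unaffected by the oracles up to the locality budget $\mu$); then, on the positive side, partition $U_{n,k}$ according to which subset $S$ of oracle gates, $|S| \le d$, is activated, and argue that since $\mu \le 1/16$ is small, most cliques see no oracle help at all, leaving a $(1-16\mu)$-fraction — say at least half — of $U_{n,k}$ on which the circuit behaves like an ordinary monotone circuit. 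On that restricted but still large set of positive inputs, I would run the Alon–Boppana lower bound, which gives $n^{\Omega(\sqrt{k})}$; the factor $1/d$ in the exponent enters because one loses a union-bound over the $\binom{s}{\le d}$ choices of activated oracle subsets, so the argument needs $s^d$ to be smaller than the Alon–Boppana bound. I expect the bookkeeping around the distribution $\mathcal{D}$ and the measure of the rectangles to require care, but the core is a black-box reduction to the classical lower bound restricted to a large subset of cliques.

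For the upper bound (part 2), the goal is to beat the Alon–Boppana $n^{\Theta(\sqrt k)}$ barrier in the usual monotone setting, which is impossible without oracle help, so the oracle gates must do something essential — but only on an $\varepsilon$-fraction of the inputs, and subject to $A_1$ (pairwise-disjoint $U_i$'s). The natural approach is to start from the standard monotone circuit that detects $k$-cliques by brute force over all $\binom{n}{k}$ potential cliques, or better, a smarter $n^{O(\sqrt k)}$-size construction, and then realize that to separate $U_{n,k}$ from $V_{n,k}$ one does not need a correct $k$-clique detector: one only needs to be right on cliques versus complete $\zeta$-partite graphs. The idea is to use $A_1$-oracle gates $y_i$, each with $U_i$ a single clique (or a small disjoint family of cliques), to ``hardwire'' the answer $1$ on a batch of cliques, while $V_i = \varnothing$ (or a tiny set), so these gates contribute $0$ on every negative instance. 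Since the $U_i$'s are disjoint and the uniform measure on $U_{n,k}$ is what counts, covering an $\varepsilon$-fraction of all $k$-cliques this way costs locality exactly $\le \varepsilon$. One then builds an ordinary monotone sub-circuit of size $n^{O_\varepsilon(\sqrt k)}$ that correctly separates the remaining $(1-\varepsilon)$-fraction of cliques from all of $V_{n,k}$ — and here the point is that shrinking the set of positive instances one must handle by a constant factor does not change the $n^{\Theta(\sqrt k)}$ order of the construction, so this is free; the oracle gates are pure convenience for matching the lower bound's dependence on $\mu$ rather than a source of real savings. The circuit is then the OR of the oracle gates and this sub-circuit.

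The main obstacle, in my estimation, is part 1: making the ``delete the oracle-helped cliques and apply Alon–Boppana'' step fully rigorous inside the approximation-method framework. The subtlety is that the approximation method assigns approximators to every gate and bounds errors against the fixed distributions on $U_{n,k}$ and $V_{n,k}$; one cannot simply ``restrict'' to a subset of positive inputs without re-examining how the closure operations (the $\sqcup$ and $\sqcap$ steps, the sunflower-type lemma in the Alon–Boppana version) interact with the oracle inputs $y_i$ viewed as extra monotone variables. I would handle this by treating each $y_i$ as a new primitive variable whose ``true'' positive set is $U_i$, showing that any approximator's error on $U_{n,k}$ decomposes as error-ignoring-oracles plus a term bounded by $\mu$, and then invoking condition $A_d$ precisely to control how the $d$-wise intersections of the $U_i$ behave under the sunflower lemma — this is where $A_d$ does real work rather than just bounding the per-clique oracle count. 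The depth-independence comes for free from the approximation method, as in the classical case, and the hypothesis $k \le n^{1/4}$ is what keeps the sunflower parameters and the measure estimates in the usable range, exactly as in Alon–Boppana.
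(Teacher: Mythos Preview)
Your proposal has genuine gaps in both parts.

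\textbf{Part 2 (upper bound).} The construction you sketch fails on two counts. First, taking $V_i = \emptyset$ does \emph{not} force $y_i$ to output $0$ on negative instances: the CLO must be correct for \emph{every} monotone $f_i$ separating $(U_i,V_i)$, and with $V_i=\emptyset$ the constant-$1$ function is such a separator, so your OR of oracle gates can accept all of $V_{n,k}$. Second, you assume an oracle-free monotone sub-circuit of size $n^{O_\varepsilon(\sqrt{k})}$ separating (a $(1-\varepsilon)$-fraction of) $U_{n,k}$ from $V_{n,k}$; no such construction is known --- the trivial monotone upper bound is $n^{O(k)}$, and bringing it down to $n^{O(\sqrt{k})}$ is precisely what the oracles buy. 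The paper's construction is different and uses the oracles for \emph{correctness on negatives}: for $\ell = \Theta_\varepsilon(\sqrt{k})$ it takes
\[
E(\vec{x},\vec{y}) \;=\; \bigvee_{D \in \binom{[n]}{\ell}} \Big(\bigwedge_{e \in \binom{D}{2}} x_e \;\wedge\; y_D\Big),
\]
with $U_D$ the cliques $K_B$ whose lexicographically first $\ell$ vertices equal $D$ (so the $U_D$ are pairwise disjoint, giving $A_1$), and $V_D = \{H \in V_{n,k} : K_D \subseteq H\}$. The $\ell$-clique indicator alone would accept many graphs in $V_{n,k}$; the oracle $y_D$ kills exactly those. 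Locality is bounded by $\Pr_\chi[K_D \subseteq G(\chi)] \le \exp(-\Omega(\ell^2/k))$.

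\textbf{Part 1 (lower bound).} Your key step --- ``since $\mu \le 1/16$ is small, most cliques see no oracle help at all'' --- is false. Locality bounds the measure of $\bigcup_i U_i \times V_i$ under the \emph{product} distribution; it says nothing about $|\bigcup_i U_i|/|U_{n,k}|$. In the upper-bound construction above, \emph{every} clique lies in some $U_D$ yet locality is arbitrarily small. So there is no oracle-free region on which to apply Alon--Boppana as a black box. Your partition-by-activated-subset idea and the $\binom{s}{\le d}$ union bound are correct and are what the paper does, but the oracle terms must be carried through rather than discarded: using $\mathcal{F}^\star$ and $A_d$ one writes $D(\vec{x},\mathcal{F}^\star) = \bigvee_{|J|\le d} D_J(\vec{x}) \wedge f^\star_{(U_J,V_J)}(\vec{x})$, approximates each ordinary monotone circuit $D_J$ \`a la Alon--Boppana, and union-bounds the error over the $(s+1)^d$ sets $J$ (this is indeed where the $1/d$ in the exponent appears). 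The resulting approximator has the form $\bigvee_i \lceil X_i\rceil \wedge f^\star_{(U'_i,V'_i)}$ with $|X_i|\le \lfloor\sqrt{k}\rfloor$ and $U'_i\times V'_i \subseteq \mathcal{B}$, and a separate lemma shows any such circuit errs on a $1/10$-fraction of one side; \emph{that} lemma is where $\mu\le 1/16$ is actually used, via the dichotomy that each rectangle inside $\mathcal{B}$ has either $\mathcal{D}^U(U'_i)\le \sqrt{\mu}$ or $\mathcal{D}^V(V'_i)\le \sqrt{\mu}$.
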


\vspace{0.2cm}

The proof of Theorem \ref{t:thm2} appears  in Section \ref{s:thm2}. The lower bound extends results on the monotone circuit size complexity of $k$-clique for large $k = k(n)$ obtained in \citep{DBLP:journals/combinatorica/AlonB87}.\footnote{For $k \leq \log n$, near-optimal results were proved in \citep{razborov1985lower}.} Indeed, our argument relies on their analysis of Razborov's approximation method \citep{razborov1985lower}, with extra work required to handle the oracle gates.  The upper bound is achieved by an explicit description of a monotone CLO generalizing the construction from Theorem \ref{t:thm1}. \blue{The following corollary, stated for reference, is immediate from Theorem \ref{t:thm2}.}

\begin{corollary}\label{c:tight}
Let $5 \leq k(n) \leq n^{1/4}$, $\mu = 1/50$, and assume rectangles are mapped to local oracle gates in a way that no $k$-clique is associated to more than a constant number of rectangles. Then  the monotone circuit size complexity of separating the sets $U_{n,k}$ and $V_{n,k}$ is $n^{\Theta(\sqrt{k})}$.
\end{corollary}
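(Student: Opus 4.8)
The plan is to derive Corollary~\ref{c:tight} directly by combining the two halves of Theorem~\ref{t:thm2}, choosing the parameters so that the hypotheses of both match the statement. For the upper bound, I would invoke part~2 of Theorem~\ref{t:thm2} with $\varepsilon = 1/50$: this produces a monotone circuit with local oracles $C(\vec{x},\vec{y})$ of size $n^{O(\sqrt{k})}$ separating $U_{n,k}$ and $V_{n,k}$, whose $y$-variables have locality $\mu \le 1/50$ and satisfy condition $A_1$. Since $A_1$ says that no two sets $U_i$ intersect, every $k$-clique lies in at most one rectangle, so in particular the constructed circuit meets the ``constant number of rectangles per $k$-clique'' hypothesis of the corollary (with constant $1$). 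Hence there exists a circuit of the required form of size $n^{O(\sqrt{k})}$, giving the upper bound.

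For the lower bound I would apply part~1 of Theorem~\ref{t:thm2}. The key observation is that the assumption in the corollary — that the rectangle-to-oracle assignment maps no $k$-clique to more than a constant number $c$ of rectangles — is exactly the statement that every collection of $c+1$ of the sets $U_i$ has empty intersection, i.e.\ condition $A_d$ holds with $d = c = O(1)$. (If some $d+1$ of the $U_i$ had a common point, that $k$-clique would belong to $d+1 \ge c+1$ rectangles, a contradiction.) Since $\mu = 1/50 \le 1/16$, part~1 of Theorem~\ref{t:thm2} applies and yields $\mathsf{size}(D) = n^{\Omega(\sqrt{k}/d)} = n^{\Omega(\sqrt{k})}$ for every such circuit $D$, because $d$ is an absolute constant. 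Combining the two bounds gives monotone CLO complexity $n^{\Theta(\sqrt{k})}$ for separating $U_{n,k}$ and $V_{n,k}$ under the stated constraints, which is the claim.

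The only genuine content beyond bookkeeping is verifying that the two parametric conditions line up: that $\varepsilon = 1/50 \le 1/16$ is comfortably within the locality regime of the lower bound, that $5 \le k(n) \le n^{1/4}$ is the common range of validity in both parts of Theorem~\ref{t:thm2}, and — the point most worth spelling out — the translation between the combinatorial phrasing of the corollary (``no $k$-clique is associated to more than a constant number of rectangles'') and the intersection-based condition $A_d$ used in Theorem~\ref{t:thm2}. These are equivalent because a $k$-clique $K \in U_{n,k}$ is ``associated to'' the rectangle $U_i \times V_i$ precisely when $K \in U_i$, so the maximum number of rectangles a clique belongs to equals the maximum $t$ for which some $t$ of the $U_i$ have nonempty intersection. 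I expect this identification to be the main (and essentially the only) obstacle, and it is a routine one; there is no new estimate to prove, since the heavy lifting — Razborov's approximation method adapted to oracle gates for the lower bound, and the explicit clique-detecting construction for the upper bound — is already carried out in the proof of Theorem~\ref{t:thm2}.
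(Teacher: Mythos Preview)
Your proposal is correct and is exactly the intended derivation: the paper states Corollary~\ref{c:tight} without a separate proof because it follows immediately from the two parts of Theorem~\ref{t:thm2} in the way you describe. The only substantive point is the identification of the corollary's ``no $k$-clique in more than a constant number of rectangles'' hypothesis with condition $A_d$ for $d=O(1)$, which you handle correctly.
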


\blue{(We note that the constant $1/50$ appearing in this statement is not particularly important, and that any small enough constant locality parameter $\mu$ suffices.) To our knowledge, Corollary \ref{c:tight} provides the first explanation for the  tightness of the Alon-Boppana \citep{DBLP:journals/combinatorica/AlonB87} exponential lower bounds for $k$-clique. In particular, in order to prove monotone circuit lower bounds for this problem stronger than $n^{\sqrt{k}}$ in the regime where $k(n) \gg \mathsf{poly}(\log n)$, one has to consider either a different set of instances, or employ a technique that does not apply to circuits with local oracles of constant locality.}\footnote{\blue{We remark that much tighter monotone lower bounds of the form $n^k/\mathsf{poly}(\log n)$ are known in the regime where $k$ is constant or slightly super-constant \citep{razborov1985lower, DBLP:journals/combinatorica/AlonB87}. Interestingly, these results do not generalize to circuits with local oracles due to the different choice of parameters employed in the corresponding legitimate lattices.}}

We discuss some directions for future investigations in Section \ref{s:concluding}, where we also say a few more words on the connection to proof complexity.\footnote{We have made no attempt to optimize the constants and the asymptotic notation appearing in Theorems \ref{t:thm1} and \ref{t:thm2}.}

\section{Notation and basic facts}\label{s:notation}

Let $[e]$ denote the set $\{1, 2, \ldots, e\}$, $e \in \mathbb{N}$. For a set $B$, we use $\binom{B}{\ell}$ to denote the family of subsets of $B$ of size exactly $\ell$. The function $\log (\cdot)$ refers to logarithm in base $2$. For a set $V$, we use $v \sim V$ to denote a uniformly distributed element from $V$.  We are interested in the computation of partial boolean functions over $\{0,1\}^m$. For $A \subseteq \{0,1\}^m$, a function $f \colon A \to \{0,1\}$ is monotone if $x, y \in A$ and $x \preceq y$ (i.e, $x_i \leq y_i$ for all $i \in [m]$) imply $f(x) \leq f(y)$.\\

\noindent \textbf{Monotone CLOs.}  A monotone boolean circuit $C(x_1, \ldots, x_n, y_1, \ldots, y_e)$ on $n$ variables and $e$ local oracles (monotone CLO for short) is a (non-empty) directed acyclic graph containing $\leq n + e + 2$ sources and one sink (the output node). The non-source nodes have in-degree $2$. Source nodes are labeled by elements in $\{x_1, \ldots, x_n\} \cup \{y_1, \ldots, y_e\} \cup \{0,1\}$, and each non-source node is labeled by a gate symbol in $\{\wedge, \vee\}$. We say that $C$ has size $s$ if the total number of nodes in the underlying graph is $s$, including source nodes. The computation of $C$ on an input string $(a,b) \in \{0,1\}^n \times \{0,1\}^e$ is defined in the natural way.

The formulation above is consistent with the statement of Theorem \ref{t:thm2}. In Theorem \ref{t:thm1}, which concerns bounded-depth circuits, we allow the internal $\{\wedge, \vee\}$-nodes to have unbounded fan-in.

We consider the computation of $C(\vec{x}, \vec{y})$ on input pairs where each bit in the second input $\vec{y}$ is a function of $\vec{x}$. Furthermore, we will restrict our analysis to monotone computations over a set $A \subseteq \{0,1\}^n$ of interest. For this reason, to specify the computation of $C$ on a string $x \in A$, we will associate to each local oracle variable $y_i$ a corresponding monotone function $f_i \colon A \to \{0,1\}$. 

In order to obtain a non-trivial notion of circuit complexity in this model, we use a real-valued parameter $\mu \in [0,1]$ to control the family of admissible functions $f_i$. Each function $f_i$ separates a particular pair of sets $U_i \subseteq f^{-1}(1) \subseteq A$ and $V_i \subseteq f^{-1}(0) \subseteq A$, but $C$ must be correct no matter the choice of the functions $f_i$ separating these sets.  The parameter $\mu$ captures the measure of $\bigcup_i U_i \times V_i$. This is formalized by the definitions introduced next.\\

\noindent \textbf{Correctness and locality.} 
Let $f \colon \{0,1\}^n \to \{0,1\}$, $U \subseteq f^{-1}(1)$, $V \subseteq f^{-1}(0)$, $W = (U,V)$, and $A = U \cup V$. Moreover, let $U_1, \ldots, U_e \subseteq U$ and $V_1, \ldots, V_e \subseteq V$ be sets of inputs, and for convenience, let $\mathcal{W} = (W_i)_{i \in [e]}$ denote the sequence of pairs $W_i = (U_i, V_i)$. Finally, let $\mathcal{D}$ be a probability distribution supported over $U \times V$. We say that $\mathcal{W}$ has locality $\mu$ with respect to $\mathcal{D}$ if, for $\mathcal{B} = \bigcup_{i \in [e]} U_i \times V_i$,
$$
\Pr_{(u,v) \sim \mathcal{D}}\big [(u,v) \in \mathcal{B} \big ] \;\leq\; \mu.
$$

We say that a pair $W' = (U', V')$ is included in the pair $W = (U,V)$ if $U' \subseteq U$ and $V' \subseteq V$, and that a sequence $\mathcal{W}  = (U_i, V_i)_{i \in [e]}$ of pairs is included in $W$ if each member $W_i = (U_i,V_i)$ of $\mathcal{W}$ is included in $W$. Let $g \colon A \to \{0,1\}$ be an arbitrary monotone boolean function over $A = U \cup V$. We say that $g$ separates a pair $(U',V')$ if $g(U') = 1$ and $g(V') = 0$. Let $\mathcal{F} = (f_1, \ldots, f_e)$ be a sequence of functions, where each $f_i \in A \to \{0,1\}$ is monotone. We say that $\mathcal{F}$ separates $\mathcal{W}$ if each $f_i$ separates $(U_i, V_i)$. For convenience, we also say in this case that $\mathcal{F}$ is a $\mathcal{W}$-separating sequence of functions.

Given a monotone CLO pair $(C, \mathcal{W})$ as above, and a  $\mathcal{W}$-separating sequence $\mathcal{F}$ of monotone functions, let $$C(\vec{x}, \mathcal{F}) \eqdef C(x_1, \ldots, x_n, f_1(\vec{x}), \ldots, f_e(\vec{x}))$$ denote the function in $A \to \{0,1\}$ that agrees with the output of $C$ when each oracle input $y_i$ is set to $f_i(x)$. Observe that $C(x, \mathcal{F})$ is a \emph{monotone} function over $A = U \cup V$, since $C$ is a monotone circuit and each $f_i$ is a monotone function over $A$. We will sometimes abuse notation and view $C(x, \mathcal{F})$ as a circuit. We say that the pair $(C, \mathcal{W})$ computes the function $f \colon A \to \{0,1\}$ if for every $\mathcal{W}$-separating sequence $\mathcal{F}$ of monotone functions, we have $C(a,  \mathcal{F}) = f(a)$ for all $a \in A$. \blue{(We stress that the monotone CLO pair must be correct on every input string, and on every $\mathcal{W}$-separating sequence.)}

Finally, let $f \in \{0,1\}^n \to \{0,1\}$ be a monotone function, $A = U \cup V$ for sets $U \subseteq f^{-1}(1)$ and $V \subseteq f^{-1}(0)$, and $W = (U,V)$. We say that $f$ can be computed over $A \subseteq \{0,1\}^n$ by a monotone circuit with local oracles of size $s$ and locality $\mu$ (with respect to a distribution $\mathcal{D}$) if there exists a monotone circuit $C(\vec{x}, \vec{y})$ of size $\leq s$ and a sequence $\mathcal{W} = (U_i,V_i)_{i \in [e]}$ of length $e \leq s$ that is included in $W$ and has locality $\leq \mu$ such that the monotone CLO pair $(C,\mathcal{W})$ computes $f$ over $A$.

For convenience of notation, we will sometimes write $y_i = y[U_i,V_i]$ to indicate a local oracle over the pair $W = (U_i, V_i)$.\\

\noindent \textbf{Defining $U_{n,k}$, $V_{n,k}$, and $\mathcal{D}_{n,k}$.} Let $m = \binom{n}{2}$, where $n \geq 4$, and let $k \in \mathbb{N}$ be an integer satisfying $3 \leq k < n$. We view $[n]$ as a set of vertices, and $[m]$ as its associated set of (undirected) edges. For $B \subseteq [n]$, we use $K_B \in \{0,1\}^m$ to denote the graph (also viewed as a string) corresponding to a clique over $B$. Let
\begin{eqnarray} 
U_{n,k} & \eqdef & \Big \{K_B \in \{0,1\}^m \mid B \in  \binom{[n]}{k} \Big \},~\text{and}\nonumber \\ 
V_{n,k} & \eqdef & \{H \in \{0,1\}^m \mid H~\text{is a non-trivial complete}~\zeta\text{-partite graph},~\blue{\text{where}~\zeta = k-1}\}, \nonumber \\
A_{n,k} & \eqdef & U_{n,k} \cup V_{n,k}. \nonumber
\end{eqnarray}
Clearly, $U_{n,k} \cap V_{n,k} = \emptyset$. It is convenient to associate to each coloring $\chi \colon [n] \to [k-1]$ a corresponding graph $G(\chi)$, where $e = \{v_1,v_2\} \in E(G(\chi))$ if and only if $\chi(v_1) \neq \chi(v_2)$. Let
$$
V_{n,k}^\chi \eqdef \{\chi \mid \chi \colon [n] \to [k-1]\}
$$
be the family of all possible colorings of $[n]$ using at most $k-1$ colors. \blue{Under our definitions, for a given coloring $\chi \in V^{\chi}_{n,k}$ we have $G(\chi) \in V_{n,k}$ if and only if $|\chi([n])| = k - 1$.} We measure the locality of monotone CLO pairs $(C, \mathcal{W})$ separating $U_{n,k}$ and $V_{n,k}$ with respect to a product distribution $\mathcal{D}_{n,k} \eqdef \mathcal{D}_{n,k}^U \times \mathcal{D}_{n,k}^V$, whose components are defined as follows. $\mathcal{D}_{n,k}^U$ is simply the uniform distribution over the $k$-cliques in $U_{n,k}$, while $\mathcal{D}_{n,k}^V$ assigns to each fixed graph $H \in V_{n,k}$ probability mass
$
\mathcal{D}_{n,k}^V(H) \;\eqdef\; \Pr_{\chi \sim V_{n,k}^\chi}[G(\chi) = H \mid G(\chi) \in V_{n,k}]
$.\footnote{\blue{Note that the probability that a random coloring $\chi \colon [n] \to [k-1]$ contains less than $k-1$ non-trivial color classes is exponentially small in $n$ for the values of $k(n)$ investigated in Theorems \ref{t:thm1} and \ref{t:thm2}.}} \blue{(This is simply the uniform distribution over $V_{n,k}$, but this is not the most convenient point of view in some estimates.)}\\

\noindent \textbf{The sequence $\mathcal{F}^\star$.} The definition introduced above agrees with the formulation of monotone circuits with oracles from \citep{1611.08680}. We stress that a source of difficulty when computing a function $f \colon A \to \{0,1\}$ using a monotone circuit $C(\vec{x}, \vec{y})$ and a sequence $\mathcal{W} = (W_i)$ of pairs included in $W = (f^{-1}(0), f^{-1}(1))$ is that $C(x, \mathcal{F})$ must be correct for \emph{every} $\mathcal{W}$-separating sequence $\mathcal{F} = (f_i)$ of monotone functions. In order to prove lower bounds against a monotone CLO pair $(C,\mathcal{W})$, we will consider a particular instantiation of the monotone functions $f_i \colon A \to \{0,1\}$, discussed next.

Let $y_i = y[U_i,V_i]$ be a local oracle variable associated with the pair $W_i = (U_i,V_i)$. We define the function $f^{\star}_{W_i} \colon A \to \{0,1\}$ as follows:
$$
  f^{\star}_{W_i}(x) =
  \begin{cases}
    1 & \text{if $x \in U_i \cup (V \,\backslash\, V_i)$,}\\
    0 & \text{otherwise.}
  \end{cases}
$$
Observe that $f^\star_{W_i}(U_i) = 1$ and $f^\star_{W_i}(V_i) = 0$. In particular, $f^\star_i \eqdef f^\star_{W_i}$ separates the pair $W_i$. We use $\mathcal{F}^\star \eqdef (f^\star_i)$ to denote the corresponding sequence of functions for a given choice of $\mathcal{W} = (W_i)$.

For an arbitrary monotone function $f \colon A \to \{0,1\}$, $U_i \subseteq U \subseteq f^{-1}(1)$, and $V_i \subseteq V \subseteq f^{-1}(0)$, $f^\star_i$ is not necessarily monotone. However, for the problem investigated in our work $f^\star_i$ is always monotone, as stated next.

\begin{claim}\label{c:Fmonotone}
Let $3 \leq k < n$. For every pair $W_i = (U_i, V_i)$ with $U_i \subseteq U_{n,k}$ and $V_i \subseteq V_{n,k}$, the function $f^\star_i \colon A_{n,k} \to \{0,1\}$ is monotone.
\end{claim}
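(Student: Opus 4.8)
The plan is to recall the definition of $f^\star_i$ and check directly that it respects the partial order $\preceq$ on $A_{n,k} = U_{n,k} \cup V_{n,k}$. Recall $f^\star_i(x) = 1$ exactly when $x \in U_i \cup (V_{n,k} \setminus V_i)$, and $f^\star_i(x) = 0$ otherwise; the ``otherwise'' set is $(U_{n,k} \setminus U_i) \cup V_i$. So I must verify that there is no pair $x \preceq z$ in $A_{n,k}$ with $f^\star_i(x) = 1$ and $f^\star_i(z) = 0$, i.e. no edge of the comparability order going from the $1$-set down into the $0$-set. The key structural fact I would isolate first is that the poset $A_{n,k}$ under $\preceq$ is essentially two antichains glued by a few relations: within $U_{n,k}$ all elements are incomparable (distinct $k$-cliques $K_B$, $K_{B'}$ with $B \neq B'$ are never comparable since each has exactly $\binom{k}{2}$ edges and neither edge set contains the other), and likewise I would pin down exactly when $K_B \preceq H$ for $H \in V_{n,k}$ and when $H \preceq H'$ for $H, H' \in V_{n,k}$.

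The main case, and the one I expect to be the crux, is a comparability $x \preceq z$ with $x = K_B \in U_{n,k}$ and $z = H \in V_{n,k}$: here $f^\star_i(K_B) = 1$ iff $K_B \in U_i$, and $f^\star_i(H) = 0$ iff $H \in V_i$, so the claim reduces to showing that $K_B \in U_i$ and $K_B \preceq H$ together force $H \notin V_i$ — equivalently, $H \in V_{n,k} \setminus V_i$, which gives $f^\star_i(H) = 1$ as needed. But $K_B \preceq H$ means $H$ contains a clique on the $k$-element set $B$; since $H$ is a complete $\zeta$-partite graph with $\zeta \le k-1$, by pigeonhole two vertices of $B$ lie in the same part of $H$, so the edge between them is absent in $H$ — contradicting $K_B \preceq H$. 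Hence there is simply \emph{no} $H \in V_{n,k}$ above any $K_B$, so this case is vacuous and imposes no constraint whatsoever. (This is where the hypothesis $\zeta < k$ is used, and it is the only place the specific combinatorics of $k$-clique enters.)

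The remaining cases are then routine. If $x \preceq z$ with both in $U_{n,k}$, then $x = z$ as noted, so $f^\star_i$ takes the same value on both. If $x \preceq z$ with $x = H, z = H' \in V_{n,k}$: here the only way $f^\star_i(H) = 1, f^\star_i(H') = 0$ is $H \in V_{n,k}\setminus V_i$ (automatic for $f^\star_i(H)=1$ when $H \notin U_{n,k}$) together with $H' \in V_i$; I need to rule this out, i.e. show $H \preceq H'$ and $H' \in V_i$ imply $H \in V_i$. This would \emph{not} hold for an arbitrary $V_i$, so I must instead argue it differently: actually $f^\star_i(H) = 1$ for \emph{every} $H \in V_{n,k}$ except those in $V_i$, and $f^\star_i(H') = 0$ precisely for $H' \in V_i$, so a bad pair needs $H \notin V_i$, $H' \in V_i$, $H \preceq H'$ — and I claim no such pair exists because no $H \in V_{n,k}$ sits below another $H' \in V_{n,k}$: if $H \preceq H'$ and both are complete multipartite on $[n]$, then the color classes of $H'$ refine those of $H$ (adding edges can only split parts), so in fact $H \preceq H'$ is possible, and the claim as I've framed it is false in general. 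The resolution is that this third case never produces a violation for a subtler reason I'd need to check against the definition — but in fact the only edges of the comparability relation that matter for monotonicity of $f^\star_i$ go \emph{out of} $U_i$ or \emph{into} $V_i$, and since nothing in $A_{n,k}$ lies strictly above an element of $U_i$ except possibly other $V$-graphs, while nothing lies strictly below an element of $V_i$ except $k$-cliques (already handled as vacuous) — so I would re-examine: a violation $H \preceq H'$ with $H \notin V_i$, $H' \in V_i$ is genuinely possible as a comparability, hence the claim must rest on $V_i$ being chosen so that $f^\star_i$ separates $W_i$ monotonically; but the statement asserts monotonicity for \emph{every} $W_i$, so the correct reading is that $f^\star_i(H') = 0 \Rightarrow H' \in V_i$ and $f^\star_i(H) = 1$ with $H \preceq H'$ forces, via $H \in U_i$ impossible (as $H \notin U_{n,k}$) or $H \in V_{n,k}\setminus V_i$, nothing — so I conclude the genuine content is exactly the pigeonhole in the previous paragraph, and the $V$-to-$V$ comparabilities are handled by observing that $f^\star_i$ restricted to $V_{n,k}$ is the indicator of $V_{n,k} \setminus V_i$, which need not be monotone on $V_{n,k}$ alone, so I should verify directly from the poset structure that $V_i$ as it arises (a set of complete multipartite graphs) together with $V_{n,k}\setminus V_i$ has no bad comparability — the cleanest route is to show the comparability graph of $V_{n,k}$ restricted to the relevant pairs is trivial, i.e. complete $\zeta$-partite graphs with $\zeta < k$ that are comparable must be equal when we also require them to be in $A_{n,k}$; this I would verify as the last technical step. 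The main obstacle is thus correctly handling the $V$–$V$ comparabilities; the $U$–$V$ direction is the conceptually important case and follows cleanly from pigeonhole on $\zeta \le k-1$ colors versus $k$ clique-vertices.
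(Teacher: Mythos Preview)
The paper's proof is a single sentence: it asserts that no two distinct elements of $A_{n,k}$ are comparable under $\preceq$, so every function on $A_{n,k}$ is vacuously monotone. Your case analysis is, in effect, an attempt to verify this antichain property piece by piece, and you handle the $U$--$U$ and $U$--$V$ cases correctly (the pigeonhole argument is exactly what rules out $K_B \preceq H$; you omit the direction $H \preceq K_B$, but that is immediate since every vertex of $[n]$ has positive degree in any $H\in V_{n,k}$ while $K_B$ touches only $k<n$ vertices).

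Your struggle with the $V$--$V$ case is not a defect of your write-up but a genuine obstruction. For $k\ge 4$ there \emph{are} strict comparabilities inside $V_{n,k}$: whenever the partition underlying $H'$ properly refines that of $H$ (for instance $H$ complete bipartite with parts $\{1\},\{2,\dots,n\}$ and $H'$ complete tripartite with parts $\{1\},\{2\},\{3,\dots,n\}$), one has $H\prec H'$ with both in $V_{n,k}$. Taking $V_i=\{H'\}$ and any $U_i\subseteq U_{n,k}$ then yields $f^\star_i(H)=1$ (as $H\in V_{n,k}\setminus V_i$) but $f^\star_i(H')=0$, so $f^\star_i$ is not monotone. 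Hence the paper's antichain assertion---and with it the claim as stated---fails for $k\ge 4$; the ``last technical step'' you left open cannot be carried out in that range.

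For $k=3$ the paper's assertion is correct: $V_{n,3}$ contains only complete bipartite graphs on $[n]$, and two distinct such graphs are incomparable (if every edge of one crosses the bipartition of the other, the two bipartitions coincide). With this short observation your argument is complete for $k=3$ and then agrees with the paper's.
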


\begin{proof}
It is enough to observe that, under these assumptions, there are no distinct strings $a_1, a_2 \in A_{n,k}$ satisfying $a_1 \preceq a_2$. \blue{Here we crucially used that the $(k-1)$-partite graphs in $V_{n,k}$ have exactly $k-1$ non-empty parts.}
\end{proof}

The use of $\mathcal{F}^{\star}$ to prove lower bounds against monotone CLO pairs $(C, \mathcal{W})$ computing a monotone function $f \colon A \to \{0,1\}$ is justified by the following observation, which describes an extremal property of $\mathcal{F}^\star$.

\begin{claim}\label{cl:hardF}
Let $\mathcal{F} = (f_i)$ be an arbitrary $\mathcal{W}$-separating sequence of monotone functions $f_i \colon A \to \{0,1\}$.
If $C(x,\mathcal{F})$ is incorrect on an input $a \in A$, then $C(x,\mathcal{F}^\star)$ is also incorrect on $a$.
\end{claim}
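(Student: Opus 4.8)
The plan is to show that $\mathcal{F}^\star = (f^\star_i)$ is, pointwise, the "most adversarial" choice among all $\mathcal{W}$-separating sequences of monotone functions, in the following precise sense: for every input $a \in A$ and every $\mathcal{W}$-separating sequence $\mathcal{F} = (f_i)$ of monotone functions, the vector $(f^\star_1(a), \ldots, f^\star_e(a))$ is coordinatewise extremal relative to $(f_1(a), \ldots, f_e(a))$ in a direction that makes $C$ fail whenever it already fails on $\mathcal{F}$. Since $C$ is a monotone circuit in its $y$-inputs, $C(a, \mathcal{F})$ as a function of the bit-vector $(f_i(a))_i$ is monotone; so it suffices to establish that $f^\star_i(a)$ is the \emph{largest} possible value of $f_i(a)$ when $C(a,\mathcal{F})$ should output $0$ but is outputting $1$, and the \emph{smallest} possible value when $C$ should output $1$ but is outputting $0$. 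But in fact a single clean observation handles both cases at once.

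First I would record the key structural fact about $f^\star_i$: for any $x \in A$, $f^\star_i(x) = 1$ exactly when $x \in U_i \cup (V \setminus V_i)$, i.e. $f^\star_i$ outputs $1$ on as much of $A$ as is possible for a function that still separates $(U_i, V_i)$ — it outputs $0$ only on $V_i$ itself. Dually, among all monotone functions separating $W_i$, $f^\star_i$ is the pointwise-maximal one on $U \setminus U_i$ and on... actually the cleanest route: observe that for \emph{any} monotone $f_i$ separating $W_i$ and \emph{any} $a \in A$, if $a \in V_i$ then $f_i(a) = 0 = f^\star_i(a)$, and if $a \notin V_i$ then $f^\star_i(a) = 1 \geq f_i(a)$. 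Hence $f_i(a) \leq f^\star_i(a)$ for every $a \in A$ and every $\mathcal{W}$-separating monotone $\mathcal{F}$, with equality forced whenever $a \in V_i$. This is the whole content of the extremal property.

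Now I would split on the type of error at $a$. Since $(C, \mathcal{W})$ is supposed to compute $f$, write $f(a) = 1$ or $f(a) = 0$. Suppose $C(a, \mathcal{F}) \neq f(a)$. If $a \in U$ (so $f(a)=1$ and the error means $C(a,\mathcal{F}) = 0$): here I use that $a \in U_i$ for the relevant $i$'s means $f_i(a) = 1 = f^\star_i(a)$ on those, and on the rest $f^\star_i(a) \geq f_i(a)$; so the $y$-input vector only goes up when we pass from $\mathcal{F}$ to $\mathcal{F}^\star$, hence by monotonicity of $C$ in $\vec y$ we get $C(a, \mathcal{F}^\star) \geq C(a, \mathcal{F}) = 0$ — this inequality is the wrong direction, so I need the reverse. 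The correct observation is: when $a \in U$, for each $i$ the value $f_i(a)$ is \emph{not} constrained to $1$ unless $a \in U_i$; but actually we want $f^\star_i(a)$ to be \emph{minimal} here. Let me instead use the dual $f^\star$ and note the argument is genuinely symmetric: the honest version is that $f^\star_i$ as defined is maximal, which helps precisely on negative inputs. So for $a \in V$ with $f(a) = 0$ and $C(a,\mathcal F) = 1$ (error on a negative instance): $f^\star_i(a) \geq f_i(a)$ for all $i$, so by monotonicity $C(a, \mathcal F^\star) \geq C(a, \mathcal F) = 1$, giving $C(a,\mathcal F^\star) = 1 \neq 0 = f(a)$, an error. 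For $a \in U$ the symmetric statement needs the pointwise-\emph{minimal} separating function $f^{\star\star}_i$ (which is $1$ only on $U_i$); but the claim as stated asserts $\mathcal F^\star$ works for \emph{both} error types, so I expect the resolution is that in the clique setting errors on positive instances cannot actually occur for the sequences we care about, or — more likely — that the claim's intended reading restricts attention to negative-side failures, which is exactly what the subsequent lower-bound argument uses. The main obstacle is therefore pinning down this asymmetry: I would either (a) verify that for the $k$-clique partial function an incorrect monotone CLO pair always errs on the $V_{n,k}$ side after the approximation-method reductions, so that only the negative case matters, or (b) observe that by the antichain property from Claim \ref{c:Fmonotone} ($A_{n,k}$ has no two comparable elements) the roles of $U$ and $V$ can be swapped and run the dual argument with the dual extremal sequence, then note $\mathcal F^\star$ suffices for the use made of it. I would go with route (a), tying the statement to how $\mathcal F^\star$ is invoked later, and present the monotonicity-of-$C$-in-$\vec y$ step as the one-line core of the proof.
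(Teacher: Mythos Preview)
Your proposal contains a genuine error in computing $f^\star_i$, and this error is precisely what derails the positive-input case. You correctly restate the definition, $f^\star_i(x) = 1$ iff $x \in U_i \cup (V \setminus V_i)$, but then immediately misread it: you write ``it outputs $0$ only on $V_i$ itself'' and ``if $a \notin V_i$ then $f^\star_i(a) = 1$.'' This is false. If $a \in U \setminus U_i$, then $a \notin U_i$ and $a \notin V$ (since $U \cap V = \emptyset$), so $a \notin U_i \cup (V \setminus V_i)$ and hence $f^\star_i(a) = 0$. In other words, $f^\star_i$ is \emph{not} the pointwise-maximal separating function on all of $A$; it is maximal on $V$ but \emph{minimal} on $U$.

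Once you see this, both cases go through symmetrically and there is no need for a dual sequence $\mathcal{F}^{\star\star}$, no need to restrict to negative-side errors, and no appeal to the antichain structure. For $a \in U$: if $a \in U_i$ then $f_i(a) = 1 = f^\star_i(a)$; if $a \in U \setminus U_i$ then $f^\star_i(a) = 0 \leq f_i(a)$. Hence $f^\star_i(a) \leq f_i(a)$ for all $i$, and by monotonicity of $C$ in $\vec{y}$ we get $C(a,\mathcal{F}^\star) \leq C(a,\mathcal{F}) = 0$, so $C(a,\mathcal{F}^\star)$ is still wrong. For $a \in V$ your argument is correct: $f^\star_i(a) \geq f_i(a)$ for all $i$, so $C(a,\mathcal{F}^\star) \geq C(a,\mathcal{F}) = 1$. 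This is exactly the paper's proof. The single function $f^\star_i$ is simultaneously extremal in the right direction on each side of the partition $A = U \cup V$, and that is the whole point.
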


\begin{proof}
Assume that $a \in U$. Consequently, $f(a) = 1$, and the assumption that $C(x,\mathcal{F})$ is incorrect means that $C(x,\mathcal{F}) = 0$. Using that each $f_i$ separates $W_i = (U_i,V_i)$ and the definition of $f^\star_i$, we get $f^\star_i(a) \leq f_i(a)$. By the monotonicity of the circuit $C$, it follows that $C(a, \mathcal{F}^\star) \leq C(a, \mathcal{F})$. Thus $C(a, \mathcal{F}^\star)$ is incorrect on input $a$ as well. The case where $a \in V$ is analogous.
\end{proof}

\blue{Therefore, $\mathcal{F}^\star$ is the
hardest separating-sequence, meaning that any circuit that computes $f$ under
$\mathcal{F}^\star$ computes $f$ under any separating-sequence.}

\begin{remark}[Simulating negated inputs]
It is possible to simulate negated input variables in $C$ using oracles gates. For instance, if $x_{\{1,2\}}$ corresponds to the input edge $\{1,2\}$, we define an oracle gate $y[U',V']$ with $U' = \{K_B \in U_{n,k} \mid \neg x_{\{1,2\}}(K_B) = 1\}$ and  $V' = \{H \in V_{n,k} \mid \neg x_{\{1,2\}}(H) = 0\}$. It is well-known that $U_{n,k}$ and $V_{n,k}$ can be separated by counting input edges and using a single negation gate. However, it is easy to see that, by combining the latter construction with the trick above, we get monotone circuits with oracles of huge locality.
\end{remark}

Indeed, for the problem investigated here, monotone circuits with local oracles can be seen as an intermediary model between monotone and non-monotone circuits, where the locality parameter $\mu$ restricts the computation of the extra input variables $y_i$.

In order to be precise, we rephrase the hypothesis $\mathcal{A}_d$ employed in Theorem \ref{t:thm2} using the notation introduced in this section.\\

\noindent \textbf{The assumption $\mathcal{A}_d$.} Let $d \in \mathbb{N}$, and $(C,\mathcal{W})$ be a monotone CLO pair with $\mathcal{W} = (W_i)_{i \in I}$, $W_i = (U_i, V_i)$, $U_i \subseteq U$ and $V_i \subseteq V$. We say that $(C,\mathcal{W})$ satisfies $\mathcal{A}_d$ if there exists no $u \in U$ and $I' \subseteq I$, $|I'| > d$ such that $u \in \bigcap_{i' \in I'} U_{i'}$.

\section{Phase transitions in depth-2: Proof of Theorem \ref{t:thm1}} \label{s:thm1}

Our argument relies on Claims \ref{c:Fmonotone} and \ref{cl:hardF} described in Section \ref{s:notation}. We start with a straightforward adaptation of a lemma from \citep{1611.08680}.

\begin{lemma}\label{l:lemmaK} Let $C(\vec{x},\vec{y})$ be a monotone circuit, $A = U \,\cup\, V$ be a disjoint union, $W = (U,V)$, and $\mathcal{W} = (W_i)_{i \in [e]}$ be a sequence of pairs included in $W$, where each $W_i = (U_i,V_i)$. Then,
\begin{itemize}
\item[\emph{1.}] Over inputs $a \in A$, for every $i,j \in [e]$, the following holds\emph{:}
\begin{eqnarray}
f^{\star}_{(U_i, V_i)} \vee f^{\star}_{(U_j, V_j)} & = & f^{\star}_{(U_i \cup U_j, V_i \cap V_j)}. \nonumber \\
f^{\star}_{(U_i, V_i)} \wedge f^{\star}_{(U_j, V_j)} & = & f^{\star}_{(U_i \cap U_j, V_i \cup V_j)}. \nonumber
\end{eqnarray}
\item[\emph{2.}] Let $\mathcal{B} \eqdef \bigcup_{i \in [e]} U_i \times V_i \subseteq U \times V$, and $i,j \in [e]$. Then $(U_i \cap U_j) \times (V_i \cup V_j) \subseteq \mathcal{B}$ and $(U_i \cup U_j) \times (V_i \cap V_j) \subseteq \mathcal{B}$.
\end{itemize}
\end{lemma}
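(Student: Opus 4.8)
The plan is to prove Lemma \ref{l:lemmaK} by direct computation from the definition of $f^\star_{(U',V')}$, treating the two parts separately but observing that Part 2 is an immediate corollary of Part 1 together with basic properties of the rectangles $U_i \times V_i$.

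For Part 1, I would argue pointwise over $a \in A = U \cup V$. Recall that $f^\star_{(U',V')}(a) = 1$ precisely when $a \in U'$ or $a \in V \setminus V'$, and $f^\star_{(U',V')}(a) = 0$ precisely when $a \in V'$ or $a \in U \setminus U'$ (using that $A$ is the disjoint union of $U$ and $V$, so these cases are exhaustive and mutually exclusive). To verify the $\vee$-identity, I would split on whether $a \in U$ or $a \in V$. If $a \in U$: then $f^\star_{(U_i,V_i)}(a) \vee f^\star_{(U_j,V_j)}(a) = 1$ iff $a \in U_i$ or $a \in U_j$, i.e. iff $a \in U_i \cup U_j$; and $f^\star_{(U_i \cup U_j, V_i \cap V_j)}(a) = 1$ iff $a \in U_i \cup U_j$ as well — so they agree. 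If $a \in V$: then $f^\star_{(U_i,V_i)}(a) \vee f^\star_{(U_j,V_j)}(a) = 1$ iff $a \notin V_i$ or $a \notin V_j$, i.e. iff $a \notin V_i \cap V_j$; and $f^\star_{(U_i \cup U_j, V_i \cap V_j)}(a) = 1$ iff $a \notin V_i \cap V_j$ — again they agree. The $\wedge$-identity is entirely dual: on $a \in U$ both sides evaluate to $1$ iff $a \in U_i \cap U_j$, and on $a \in V$ both sides evaluate to $1$ iff $a \notin V_i \cup V_j$. This is genuinely routine once the case analysis is set up, so I would present it compactly, perhaps only writing out the $\vee$-case in full and noting duality for $\wedge$.

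For Part 2, I would observe that $(U_i \cap U_j) \times (V_i \cup V_j) = \big((U_i \cap U_j) \times V_i\big) \cup \big((U_i \cap U_j) \times V_j\big) \subseteq (U_i \times V_i) \cup (U_j \times V_j) \subseteq \mathcal{B}$, and symmetrically $(U_i \cup U_j) \times (V_i \cap V_j) = \big(U_i \times (V_i \cap V_j)\big) \cup \big(U_j \times (V_i \cap V_j)\big) \subseteq (U_i \times V_i) \cup (U_j \times V_j) \subseteq \mathcal{B}$. This is just elementary set algebra and needs no appeal to Part 1; I would phrase it as a two-line computation.

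Honestly, there is no serious obstacle here — the lemma is a bookkeeping statement that records how the "extremal" oracle instantiation $\mathcal{F}^\star$ behaves under the circuit's $\wedge$ and $\vee$ gates, so that in the proof of Theorem \ref{t:thm1} one can propagate pairs $(U,V)$ up through a depth-$2$ circuit. The only thing that requires a moment of care is making sure the disjointness of $U$ and $V$ in $A$ is used (so that the "otherwise" branch in the definition of $f^\star$ is unambiguous), and that Part 2 is stated in the form actually needed downstream, namely that both the "meet" rectangle and the "join" rectangle stay inside $\mathcal{B}$ — which, combined with the locality bound on $\mathcal{B}$, is what will let us control the measure of the sets reachable by the oracle gates.
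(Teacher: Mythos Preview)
Your proof is correct and follows the same approach as the paper, which simply says ``Immediate from the definitions.'' You have just written out explicitly the pointwise case analysis and set algebra that the paper leaves to the reader.
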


\begin{proof}
Immediate from the definitions.
\end{proof}

First, we prove a weaker version of Theorem \ref{t:thm1} that forbids the extra inputs $g_j(\vec{y})$. Then we use Lemma \ref{l:lemmaK} to observe that our argument extends to the more general class of circuits.

Let $\varepsilon > 0$ be a fixed constant, and $n$ be sufficiently large.\\

\noindent \textbf{Case 1: $\mu = 1$.} Obviously, there is a trivial monotone CLO pair $(C,\mathcal{W})$ with locality $\mu = 1$ that separates $U_{n,3}$ and $V_{n,3}$: $C$ contains a single node $y_1$, and $W_1 = (U_{n,3}, V_{n,3})$.\\

\noindent \textbf{Case 2: $1/2 + \varepsilon \leq \mu \leq  1 - \varepsilon$.} We start with the upper bound. In other words, we construct a monotone CLO of size $O(n^2)$ and locality $\leq 1/2 + o(1)$.\footnote{This construction is inspired by discussions in \citep{RRobere}.} Let $x_{\{i,j\}}$ for $i \neq j \in [n]$ denote the input variable corresponding to edge $\{i,j\} \in \binom{[n]}{2}$. Consider the following monotone circuit:

$$
C(\vec{x}, \vec{y}) \;\eqdef\; \bigvee_{i < j}(x_{\{i,j\}} \wedge y_{\{i,j\}}).
$$
We associate to each $y_{\{i,j\}} = y[U_{\{i,j\}}, V_{\{i,j\}}]$ the sets\\

$U_{\{i,j\}} \eqdef \{ K_B \in U_{n,3} \mid \{i,j\} \subseteq B~\text{and these are the smallest elements in}~B\},~\text{and}~$\\

$V_{\{i,j\}} \eqdef \{ H \in V_{n,3} \mid \text{vertices}~i~\text{and}~j~\text{are in different parts of}~H\}.$\\

\noindent Observe that $C$ has size $O(n^2)$.

First, we argue that this monotone CLO is correct. If the input graph is a triangle $K_B \in \{0,1\}^m$ with $B = \{i,j,k\}$, where $i < j < k$, then $x_{\{i,j\}}(K_B) = 1$. Moreover, for any monotone function $f_{\{i,j\}}$ that separates $(U_{\{i,j\}}, V_{\{i,j\}})$, we must have $f_{\{i,j\}}(K_B) = 1$, since $K_B \in U_{\{i,j\}}$ by construction. Thus $C(K_B,\mathcal{F})$ must accept $K_B$ for all separating sequences $\mathcal{F} = (f_{\{i,j\}})$.  Now let $H \in V_{n,3}$ be a complete bipartite graph over $[n]$ with non-empty parts $V^H_1$ and $V^H_2$ partitioning $[n]$. \blue{We show that for  $i <j$ it holds that $x_{\{i,j\}}(H) \wedge y_{\{i,j\}}(H) = 0$.} If for some $x_{\{i,j\}}$ we have $x_{\{i,j\}}(H) = 1$, then $i,j$ are in different parts of $H$. By construction, any $f_{\{i,j\}}$ separating the pair $(U_{\{i,j\}}, V_{\{i,j\}})$ must output $0$ on $H$. Consequently, the output of the circuit on $H$ is $0$, under any sequence $\mathcal{F}$ of separating functions.

Next, we upper bound the locality of the $y$-variables. Let $\mathcal{B} = \bigcup_{i < j} U_{\{i,j\}} \times V_{\{i,j\}} \subseteq U_{n,3} \times V_{n,3}$. Let $(K_B, H)$ be a fixed input pair in $U_{n,3} \times V_{n,3}$. Observe that this pair is in $\mathcal{B}$ if and only if there exist $i, j \in [n]$ with $i < j$ such that:

\vspace{0.2cm}

(1) \,$\{i,j\} \in B$, 

(2) \,these are the smallest elements in $B$, and 

(3) \,the vertices $i$ and $j$ belong to different parts of $H$.

\vspace{0.2cm} 

\noindent Therefore, the locality $\mu$ of the monotone CLO defined above is upper bounded by
$$
\Pr_{(K_B, H) \sim \mathcal{D}_{n,3}}[\exists\,i<j ~\text{satisfying}~(1), (2), (3)]  \quad \leq \quad  
 \sum_{i < j} \Pr[(i,j)~\text{satisfies}~(1), (2), (3)] \nonumber
$$
\vspace{-0.37cm}
\begin{eqnarray}
(\text{using independence}) & = &  \sum_{i < j} \Pr_{H \sim \mathcal{D}^V_{n,3}}[(i,j)~\text{satisfies}~(3)]\cdot \Pr_{K_B \sim \mathcal{D}^U_{n,3}}[(i,j)~\text{satisfies}~(1),(2)] \nonumber \\
& = & \Pr_{\chi \sim V^\chi_{n,3}}[\chi(1) \neq \chi(2) \mid \chi([n])  = \{1,2\}] \cdot \sum_{i < j} \frac{n-j}{\binom{n}{3}} \nonumber\\
& = & (1/2 + o(1)) \cdot 1 \;\leq\; 1/2 + \varepsilon.\nonumber
\end{eqnarray}

\vspace{0.1cm}

We argue next the lower bound on circuit size for this range of $\mu$. In other words, we prove that if $\mu \leq 1 - \varepsilon$ then the circuit size is $\Omega_\varepsilon(n^2)$. Let $(C, \mathcal{W})$ be a monotone CLO pair, where $C(\vec{x}, \vec{y})$ is a monotone DNF with $t \leq s$ terms, $\mathcal{W} = (W_i)_{i \in [e]}$, $e \leq s$, $W_i = (U_i, V_i)$, and each $W_i$ is included in the pair $(U_{n,3}, V_{n,3})$. Further, let $\mathcal{B} = \bigcup_i U_i \times V_i$. Assume the pair $(C,\mathcal{W})$ computes $3$-clique over $A_{n,3}$. In order to establish a lower bound, we consider the sequence $\mathcal{F}^\star$, as defined in Section \ref{s:notation}. Then, using Lemma \ref{l:lemmaK}, we can write this circuit in an equivalent way as follows:
\begin{equation}\label{eq:circuitform}
C(\vec{x}, \mathcal{F}^\star) \;=\;\bigvee_{j \in [t]} \left ( \bigwedge_{e \in S_j} x_e \wedge f^\star_{(U'_j, V'_j)}(\vec{x})  \right )\,,
\end{equation}
where $S_j \subseteq \binom{[n]}{2}$ and $U'_j \times V'_j \subseteq \mathcal{B}$, for each $j \in [t]$. This is without loss of generality, since terms that did not originally include a $y$-variable can be represented using $f^\star_{(U_{n,3}, \emptyset)}$, which is equivalent to the constant $1$ function over inputs in  $A_{n,3}$.

Next, observe that if $|S_j| > 3$ for some $j \in [t]$ then the corresponding term cannot accept an input from $U_{n,3}$. Thus we can assume without loss of generality that $0 \leq |S_j| \leq 3$. Partition the terms of $C(\vec{x}, \mathcal{F}^\star)$ into sets $T_\ell$, $0 \leq \ell \leq 3$, with $T_\ell$ containing all terms for which $|S_j| = \ell$.

Every triangle $K_B$ accepted by a term from $T_0$ forces a measure $\geq 1/\binom{n}{3}$ in $\mathcal{B}$, since the corresponding functions $f^\star_{(U'_j,V'_j)}$ must satisfy $V'_j = V_{n,3}$ in order for the term not to accept a complete bipartite graph $H \in V_{n,3}$. Consequently, using that $\mu \leq 1 - \varepsilon$, a total number of at most $r = (1- \varepsilon) \binom{n}{3}$ triangles can be accepted by terms in $T_0$.

Now each term in $T_2$ or in $T_3$ accepts at most one triangle, and each term in $T_1$ accepts at most $n$ triangles. Therefore, using the preceding paragraph, in order for the circuit to accept all $\binom{n}{3}$ triangles in $U_{n,3}$, we must have:

$$|T_1| \cdot n + |T_2| + |T_3| \geq \binom{n}{3} - r = \Omega(n^3).$$
This implies that at least one of $|T_1|$, $|T_2|$, and $|T_3|$ must be $\Omega(n^2)$. In particular, the original circuit must have size at least $\Omega(n^2)$.\\

\noindent \textbf{Case 3: $0 \leq \mu \leq  1/2 - \varepsilon$.} The $O(n^3)$ size upper bound at $\mu = 0$ is achieved by the trivial monotone circuit for $3$-clique. For the lower bound, we adapt the argument presented above. Using the same notation, we assume there is a correct circuit as described in (\ref{eq:circuitform}). By the same reasoning, $|S_j| \leq 3$ for each $j \in [t]$. Furthermore, we can assume that the edges corresponding to each $S_j$ are contained in some triangle from $U_{n,3}$. 

Rewrite $C(\vec{x}, \mathcal{F}^\star)$ as an equivalent circuit $C'$:
\begin{equation}\label{eq:circuit2}
C'(\vec{x}, \mathcal{F}^\star) \;\eqdef\;\bigvee_{\ell \in I_{\leq 2}}  \left  ( \bigwedge_{e \in S_\ell} x_e \wedge f^\star_{(U_\ell,V_\ell)}(\vec{x}) \right ) \vee \bigvee_{i \in I_{3}}  \left  ( \bigwedge_{e \in S_i} x_e \wedge f^\star_{(U_i,V_i)}(\vec{x}) \right ),
\end{equation}
where $I_{\leq 2}$ contains the indexes of the original sets $S_j$ such that the edges obtained from $S_j$ touch at most $2$ vertices, and $I_{3}$ contains the indexes corresponding to sets $S_j$ whose edges span exactly $3$ vertices.

First, suppose there exists $\ell \in I_{\leq 2}$ such that $\mathcal{D}_{n,3}^V(V_\ell) \leq 1/2 - \varepsilon/4$. This implies that $f^\star_{\ell}$ rejects a subset of $V_{n,3}$ of measure at most $1/2 - \varepsilon/4$. Moreover, using that $\ell \in I_{\leq 2}$, $\bigwedge_{e \in S_\ell} x_e$ rejects a subset of $V_{n,3}$ of measure at most $1/2 + \varepsilon/8$. Consequently, the $\ell$-th term of the original circuit $C(\vec{x}, \mathcal{F}^\star)$ must accept some negative input from $V_{n,3}$. This violates the assumption that the initial monotone CLO pair computes $3$-clique over $A_{n,3}$.

We get from the previous argument that for every $\ell \in I_{\leq 2}$, $\mathcal{D}^V_{n,3}(V_\ell) \geq 1/2 - \varepsilon/4$. Consider now the quantity $\eta = |\bigcup_{\ell \in I_{\leq 2}} U_{\ell}|/|U_{n,3}|$, and observe that $\mu \geq \eta \cdot (1/2 - \varepsilon/4)$ by the previous density lower bound. Since we are in the case where $\mu \leq 1/2 - \varepsilon$, we obtain  $\eta \leq 1 - \Omega_{\varepsilon}(1)$. 

In turn, using the definition of $\eta$ and of $\mathcal{F}^\star$, it follows that the left-hand side of $C'(\vec{x}, \mathcal{F}^\star)$ in (\ref{eq:circuit2}) accepts at most a $\eta$-fraction of $U_{n,3}$. By the correctness of $C(x,\mathcal{F}^\star)$, the right-hand side of the equivalent circuit $C'(\vec{x},\mathcal{F}^\star)$ must accept at least a $\Omega_\varepsilon(1)$-fraction of the triangles in $U_{n,3}$. Now observe that for each $i \in I_3$, the corresponding term $\bigwedge_{e \in S_i} x_e$ accepts exactly one triangle. Therefore, we must have $|I_3| \geq \Omega_\varepsilon(\binom{n}{3})$. This completes the proof  that $t = \Omega(n^3)$.\\

In order to prove lower bounds in the presence of $g_j(\vec{y})$ input variables, observe that the following holds. First, all lower bounds were obtained using $\mathcal{F}^\star$. Due to Lemma \ref{l:lemmaK}, each $g_j(\vec{y})$ is equivalent over $A_{n,3}$ to $f^\star_{(U'_j,V'_j)}$, for an appropriate pair $(U'_j, V'_j)$ satisfying $U'_j \times V'_j \subseteq \mathcal{B}$. Finally, in addition to the locality bound, the inclusion in $\mathcal{B}$ is the only information about the $y$-variables that was employed in the proofs. In other words, each $g_j(\vec{y})$ can be treated as a new $y$-variable in the arguments above, without affecting the locality bounds. 

This extends the lower bound to the desired class of circuits, and completes the proof of Theorem \ref{t:thm1}.

\section{Circuits with restricted oracles: Proof of Theorem \ref{t:thm2}}\label{s:thm2}

We start with the upper bound.

\begin{lemma}\label{l:upperb}
Let \blue{$3 \leq k \leq n^{1/4}$} and $2 \leq \ell < k$. There exists a monotone circuit with local oracles $E(\vec{x}, \vec{y})$ of size $O(\binom{n}{\ell} \cdot \binom{\ell}{2})$ and locality $\mu \leq \exp(- \Omega(\ell^2/k))$ that computes $k$-clique over $A_{n,k}$. Furthermore, the local oracles associated to $E$ satisfy condition $\mathcal{A}_1$.
\end{lemma}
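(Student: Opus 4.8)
I want to build the circuit $E(\vec{x},\vec{y})$ by generalizing the Case-2 construction from Theorem~\ref{t:thm1}: there the circuit was $\bigvee_{i<j}(x_{\{i,j\}}\wedge y_{\{i,j\}})$, with the oracle $y_{\{i,j\}}$ checking ``$\{i,j\}$ are the two smallest vertices of a $k$-clique'' on the positive side and ``$i,j$ lie in different parts'' on the negative side. The natural generalization is to index the oracles by $\ell$-subsets $L\in\binom{[n]}{\ell}$ rather than edges. Concretely I would take
\begin{equation}\nonumber
E(\vec{x},\vec{y}) \;\eqdef\; \bigvee_{L\in\binom{[n]}{\ell}}\Big(\bigwedge_{\{a,b\}\in\binom{L}{2}} x_{\{a,b\}}\;\wedge\; y_L\Big),
\end{equation}
where for each $L$ the oracle $y_L=y[U_L,V_L]$ is given by $U_L\eqdef\{K_B\in U_{n,k}\mid L\text{ is the set of the }\ell\text{ smallest vertices of }B\}$ and $V_L\eqdef\{H\in V_{n,k}\mid \text{the vertices of }L\text{ lie in }\ell\text{ pairwise distinct parts of }H\}$. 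The size is $O(\binom{n}{\ell}\binom{\ell}{2})$ as claimed (each term contributes its $\binom{\ell}{2}$ clique-edge conjuncts plus the oracle gate).

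\textbf{Correctness.} For a positive input $K_B\in U_{n,k}$, let $L^\star$ be the set of the $\ell$ smallest vertices of $B$; then all clique-edges $x_{\{a,b\}}$, $\{a,b\}\in\binom{L^\star}{2}$, are present, and $K_B\in U_{L^\star}$, so any $\mathcal{W}$-separating $f_{L^\star}$ has $f_{L^\star}(K_B)=1$, hence the $L^\star$-term fires and $E$ accepts. For a negative input $H\in V_{n,k}$, suppose some term for $L$ fires: then all $\binom{\ell}{2}$ edges inside $L$ are present in $H$, which forces the $\ell$ vertices of $L$ to lie in $\ell$ distinct color classes of $H$ (an edge is present iff the endpoints have different colors, so a clique on $L$ inside $H$ means all of $L$ is rainbow-colored). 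But then $H\in V_L$, so any separating $f_L$ has $f_L(H)=0$, contradicting that the term fired. Hence $E$ rejects every $H\in V_{n,k}$, for every $\mathcal{W}$-separating sequence.

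\textbf{Locality and $A_1$.} The $A_1$ condition is immediate: a given $K_B\in U_{n,k}$ has a unique set of $\ell$ smallest vertices, so it lies in exactly one $U_L$; thus no two distinct $U_L,U_{L'}$ share a common positive input. For locality, using the product structure of $\mathcal{D}_{n,k}=\mathcal{D}^U_{n,k}\times\mathcal{D}^V_{n,k}$ and a union bound over $L$,
\begin{equation}\nonumber
\mu \;\leq\; \sum_{L\in\binom{[n]}{\ell}} \Pr_{K_B\sim\mathcal{D}^U_{n,k}}[K_B\in U_L]\cdot \Pr_{H\sim\mathcal{D}^V_{n,k}}[H\in V_L].
\end{equation}
Since the events $\{K_B\in U_L\}$ partition $U_{n,k}$, the first factors sum to $1$, so $\mu$ is at most $\max_L \Pr_{H\sim\mathcal{D}^V_{n,k}}[H\in V_L]$; in fact it is a weighted average of these quantities, so it suffices to bound each one. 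The remaining task — and I expect this to be the main technical obstacle — is to show $\Pr_{H\sim\mathcal{D}^V_{n,k}}[\text{the }\ell\text{ vertices of }L\text{ are rainbow-colored}]\leq \exp(-\Omega(\ell^2/k))$. Here one must pass from $\mathcal{D}^V_{n,k}$ (complete $\zeta$-partite graphs) back to the coloring description: $\mathcal{D}^V_{n,k}(H)=\Pr_{\chi\sim V^\chi_{n,k}}[G(\chi)=H\mid G(\chi)\in V_{n,k}]$, so one wants $\Pr_{\chi:[n]\to[k-1]}[\chi\text{ injective on }L\mid |\chi([n])|\geq 2]\leq\exp(-\Omega(\ell^2/k))$. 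Dropping the (negligible, since $k<n$) conditioning $|\chi([n])|\geq2$, a uniformly random $\chi$ colors $L$ injectively with probability $\prod_{i=0}^{\ell-1}\frac{k-1-i}{k-1}=\prod_{i=0}^{\ell-1}\bigl(1-\tfrac{i}{k-1}\bigr)\leq \exp\bigl(-\sum_{i=0}^{\ell-1}\tfrac{i}{k-1}\bigr)=\exp\bigl(-\tfrac{\ell(\ell-1)}{2(k-1)}\bigr)=\exp(-\Omega(\ell^2/k))$, using $\ell<k$ so that all factors are positive and bounded away from $1$ appropriately; the conditioning on $|\chi([n])|\geq 2$ only changes the bound by a $(1+o(1))$ factor since $\Pr[|\chi([n])|<2]=(k-1)^{1-n}$ is tiny. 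This yields the claimed locality bound and completes the proof of Lemma~\ref{l:upperb}.
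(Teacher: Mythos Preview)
Your proposal is correct and follows essentially the same approach as the paper: the same circuit $\bigvee_{L}\bigl(\bigwedge_{e\in\binom{L}{2}}x_e\wedge y_L\bigr)$, the same oracle sets (the paper writes $U_D=\{K_B:F(B)=D\}$ with $F(B)$ the lexicographically first $\ell$-subset, and $V_D=\{H:K_D\subseteq H\}$, which are exactly your $U_L,V_L$), the same disjointness argument for $A_1$, and the same reduction of the locality to the rainbow-coloring probability $\prod_{i=0}^{\ell-1}(1-\tfrac{i}{k-1})$. Your estimate $\exp\bigl(-\sum_i \tfrac{i}{k-1}\bigr)=\exp\bigl(-\tfrac{\ell(\ell-1)}{2(k-1)}\bigr)$ is in fact a bit cleaner than the paper's intermediate step $(1-\tfrac{\lfloor\ell/2\rfloor-1}{k-1})^{\ell/2}$, but the conclusions are identical.
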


\begin{proof}
We generalize a construction in the proof of Theorem \ref{t:thm1}. For every set $B \in \binom{[n]}{k}$, let $F(B) \in \binom{B}{\ell}$ be the lexicographic first $\ell$-sized subset of $B$. Consider the following monotone circuit with local oracles: 
$$
E(\vec{x}, \vec{y}) \;\eqdef\; \bigvee_{D \in \binom{[n]}{\ell}} \Big ( \bigwedge_{e \in \binom{D}{2}} x_{e}  \wedge y_{D}\; \Big )\;,
$$
where to each $y_{D}$ we associate a pair $(U_D, V_D)$ with $U_{D} \times V_{D} \subseteq U_{n,k} \times V_{n,k}$, defined as follows:
\begin{center}
$U_{D} \eqdef \{ K_B \in U_{n,k} \mid F(B) = D\}$ ~ and ~ $V_{D} \eqdef \{ H \in V_{n,k} \mid K_D \subseteq H\}$.\end{center}

By construction, $U_D \cap U_{D'} = \emptyset$ for distinct $D, D' \in \binom{[n]}{\ell}$. In other words, assumption $\mathcal{A}_1$ is satisfied. Further, the size of $E$ is $O(\binom{n}{\ell} \cdot \binom{\ell}{2})$. The correctness of this monotone CLO can be established by a straightforward generalization of the argument from Section \ref{s:thm1}. It remains to estimate its locality parameter $\mu$.

Fix a set $D \in \binom{[n]}{\ell}$, and let $\gamma_D \eqdef \mathcal{D}^{V}_{n,k}(V_D)$. By symmetry, $\gamma_D = \gamma_{D'}$ for every $D' \in \binom{[n]}{\ell}$. Since distinct sets $U_{D}$ are pairwise disjoint and locality is measured with respect to the product distribution $\mathcal{D}_{n,k} = \mathcal{D}_{n,k}^U \times \mathcal{D}_{n,k}^V$, the locality of the oracle rectangles associated with $E$ is at most $\gamma_D$.  This value can be upper bounded as follows:
\begin{eqnarray}
\gamma_D = \Pr_{H \sim \mathcal{D}^V_{n,k}}[K_D \subseteq H] & = & \Pr_{\chi \sim V^\chi_{n,k}}[K_D \subseteq G(\chi) \mid G(\chi) \in V_{n,k}] \nonumber \\
& = & \frac{\Pr_\chi[ K_D \subseteq G(\chi) \wedge G(\chi) \in V_{n,k} ]}{\Pr_\chi[G(\chi) \in V_{n,k}]} \nonumber \\
& \leq & \frac{\Pr_\chi[K_D \subseteq G(\chi)]}{\Pr_\chi[\,\blue{|\chi([n])| = k-1}\,]}
\nonumber \\
(\text{using~}\blue{3 \leq k \leq n^{1/4}}\text{~and~}n \to \infty) & \leq & (1 + o(1)) \cdot \frac{(k-1)(k-2)\ldots(k-\ell)}{(k-1)^\ell} \nonumber \\
 & \leq & (1 + o(1)) \cdot \frac{(k - \lfloor \ell/2 \rfloor)^{\ell/2}}{(k-1)^{\ell/2}} \nonumber \\
& = & (1 + o(1)) \cdot \left ( 1 - \frac{\lfloor \ell/2 \rfloor - 1}{k - 1} \right )^{\ell/2} \nonumber \\ 
(\text{using~}(1 - x) \leq e^{-x}~\text{and}~0 \leq x \leq  1)& \leq & \exp(-\Omega(\ell^2/k)).\nonumber
\end{eqnarray}
This completes the proof of Lemma \ref{l:upperb}.
\end{proof}

The upper bound in Theorem \ref{t:thm2} follows immediately from Lemma \ref{l:upperb}, by taking a large enough $\ell = O(\sqrt{k})$. Observe that, more generally, one can get a trade-off between circuit size and locality.\\

We move on now to the lower bound part, which relies on a sequence of lemmas. For a set $X \subseteq [n]$, we let $\lceil X \rceil \eqdef \bigwedge_{\{i,j\} \in \binom{X}{2}} x_{\{i,j\}}$ be the corresponding clique indicator circuit. For convenience, we define $\lceil X \rceil \eqdef 1$ if $X$ is a singleton or the empty set. Also, note that $\lceil X \rceil = \lceil X \rceil \wedge f^\star_{U_{n,k}, \emptyset}$ over $A_{n,k}$. Under this notation, we don't need to consider standalone terms in the lemma below, which adapts to our setting a result from \citep{1611.08680}. 

\begin{lemma}\label{l:lb-depth2}
Let $\mathcal{W} = (W_i)$ with $W_i = (U_i,V_i)$ be a sequence of pairs included in $(U_{n,k},V_{n,k})$. Let $C(\vec{x}, \vec{y})$ be a monotone circuit with local oracles of the form
$$
C(x, y) \;=\;\bigvee_{i \in [t]} \Big ( \lceil X_i \rceil \wedge y[U_i,V_i] \Big ),
$$
where $t$ is arbitrary, $|X_i| \leq \lfloor \sqrt{k} \rfloor$, $k(n) \geq 5$, and all rectangles $U_i \times V_i \subseteq \mathcal{B}$, for some set $\mathcal{B} \subseteq U_{n,k} \times V_{n,k}$ of locality $\mu \leq 1/16$. Then, for large enough $n$, the following holds.
\begin{itemize} 
\item[\emph{1.}] Either $C(x, \mathcal{F}^{\star})$ accepts a subset of $V_{n,k}$ of measure at least $1/10$, or
\item[\emph{2.}] $C(x, \mathcal{F}^{\star})$ rejects a subset of $U_{n,k}$ of measure at least $1/10$.
\end{itemize}  
\end{lemma}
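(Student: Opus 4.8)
The plan is to show that if item~2 fails then item~1 holds. So assume $C(x,\mathcal{F}^{\star})$ rejects a subset of $U_{n,k}$ of $\mathcal{D}_{n,k}^U$-measure strictly smaller than $1/10$, and deduce item~1. First I would make the behaviour of $C(x,\mathcal{F}^{\star})$ explicit. Over $A_{n,k}$, the function $f^{\star}_{(U_i,V_i)}$ restricted to $U_{n,k}$ is the indicator of $U_i$, and restricted to $V_{n,k}$ is the indicator of $V_{n,k}\setminus V_i$. Hence the $i$-th term of $C(x,\mathcal{F}^{\star})$ accepts a clique $K_B$ precisely when $\lceil X_i\rceil(K_B)=1$ and $K_B\in U_i$, and accepts a graph $H\in V_{n,k}$ precisely when $\lceil X_i\rceil(H)=1$ and $H\notin V_i$. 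Writing $\mathcal{C}_{\mathrm{acc}}$ for the set of $k$-cliques accepted by $C(x,\mathcal{F}^{\star})$, the failure of item~2 reads $\mathcal{D}_{n,k}^U(\mathcal{C}_{\mathrm{acc}})>9/10$, and for each $B\in\mathcal{C}_{\mathrm{acc}}$ I fix a witness index $i(B)$ with $K_B\in U_{i(B)}$.

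The main step is to turn the locality hypothesis into a statement about a single term. Put $v(B)\eqdef\mathcal{D}_{n,k}^V(V_{i(B)})$. Since $U_{i(B)}\times V_{i(B)}\subseteq\mathcal{B}$ and $\mathcal{D}_{n,k}=\mathcal{D}_{n,k}^U\times\mathcal{D}_{n,k}^V$, drawing $K_B\sim\mathcal{D}_{n,k}^U$ and $H\sim\mathcal{D}_{n,k}^V$ independently gives
\begin{equation}\nonumber
\E_{B}\big[\mathbf{1}[B\in\mathcal{C}_{\mathrm{acc}}]\cdot v(B)\big]\;=\;\Pr_{B,H}\big[B\in\mathcal{C}_{\mathrm{acc}}\ \text{and}\ H\in V_{i(B)}\big]\;\leq\;\mathcal{D}_{n,k}(\mathcal{B})\;\leq\;\mu\;\leq\;\tfrac{1}{16}.
\end{equation}
A Markov bound then yields $\mathcal{D}_{n,k}^U\big(\{B\in\mathcal{C}_{\mathrm{acc}}:v(B)>1/8\}\big)\leq 1/2$, so the set $\{B\in\mathcal{C}_{\mathrm{acc}}:v(B)\leq 1/8\}$ has measure larger than $9/10-1/2>0$ and is, in particular, nonempty. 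I fix any $B_0$ in it and set $i_0\eqdef i(B_0)$, so that $\mathcal{D}_{n,k}^V(V_{i_0})\leq 1/8$ while $|X_{i_0}|\leq\lfloor\sqrt k\rfloor$.

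It remains to check that term $i_0$ by itself accepts a $\geq 1/10$ fraction of $V_{n,k}$. By the description above, that term accepts every $H\in V_{n,k}$ with $\lceil X_{i_0}\rceil(H)=1$ and $H\notin V_{i_0}$, a set of $\mathcal{D}_{n,k}^V$-measure at least $\mathcal{D}_{n,k}^V(\{H:\lceil X_{i_0}\rceil(H)=1\})-\mathcal{D}_{n,k}^V(V_{i_0})$. The first quantity equals $\Pr_{\chi}[\chi\text{ injective on }X_{i_0}\mid|\chi([n])|\geq 2]$, and since $|X_{i_0}|\leq\lfloor\sqrt k\rfloor$ the same elementary estimate used in the proof of Lemma~\ref{l:upperb} (now applied in the $\geq$ direction, via $\ln(1-x)\geq -x/(1-x)$) shows $\prod_{j=1}^{|X_{i_0}|-1}(1-\frac{j}{k-1})\geq e^{-1}$, while conditioning on $|\chi([n])|\geq 2$ costs only an additive $(k-1)^{1-n}=o(1)$. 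Consequently $C(x,\mathcal{F}^{\star})$ accepts a subset of $V_{n,k}$ of measure at least $e^{-1}-o(1)-1/8>1/10$ for $n$ large, which is item~1.

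The one genuinely delicate point is the transition from the locality bound, which controls only the measure of the \emph{union} $\mathcal{B}=\bigcup_i U_i\times V_i$, to a usable bound on individual terms: a priori many rectangles could jointly cover the accepted cliques. The displayed inequality resolves this — the product structure of $\mathcal{D}_{n,k}$ together with the containment $U_{i(B)}\times V_{i(B)}\subseteq\mathcal{B}$ lets me, inside the expectation over $B$, charge each accepted clique to a \emph{single} rectangle, converting the locality cap into the averaged bound on $v(B)$ that drives the rest of the argument. Everything else is bookkeeping plus the routine coloring estimate already carried out in Lemma~\ref{l:upperb}.
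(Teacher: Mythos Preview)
Your proof is correct and rests on the same core observation as the paper's: once some term has $\mathcal{D}_{n,k}^V(V_i)$ small, that single term already accepts a large fraction of $V_{n,k}$, because the clique indicator $\lceil X_i\rceil$ with $|X_i|\le\lfloor\sqrt{k}\rfloor$ fires on most colorings. The difference lies in how that term is located. The paper argues directly: from $U_i\times V_i\subseteq\mathcal{B}$ one gets $\mathcal{D}^U(U_i)\cdot\mathcal{D}^V(V_i)\le\mu$, so for every $i$ either $\mathcal{D}^V(V_i)\le\mu^{1/2}\le 1/4$ (and item~1 follows from that term) or $\mathcal{D}^U(U_i)\le\mu^{1/2}$; if the latter holds for all $i$, then $\bigcup_i U_i$ has measure $\le\mu^{1/2}\le 1/4$ and item~2 follows. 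You instead run the contrapositive and use an averaging/Markov step: assuming many cliques are accepted, the expectation $\E_B[\mathbf{1}[B\in\mathcal{C}_{\mathrm{acc}}]\,v(B)]\le\mu$ forces some accepted clique to have $v(B)\le 1/8$. The two arguments are equivalent in spirit; your Markov formulation is perhaps slightly more flexible with the constants (you get $\mathcal{D}^V(V_{i_0})\le 1/8$ rather than $1/4$), while the paper's dichotomy is a line shorter. For the coloring estimate you use the exact product and $\ln(1-x)\ge -x/(1-x)$ to get $\ge e^{-1}$, whereas the paper applies a union bound over pairs in $X_i$ to get $\Pr[\lceil X_i\rceil(H)=0]\le (1+o(1))\frac{k}{2(k-1)}$; both suffice for the stated $1/10$ threshold.
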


\begin{proof}
If $t = 0$ the circuit computes a constant function, and consequently one of the items above must hold. Otherwise, for each $i \in [t]$, since $U_i \times V_i \subseteq \mathcal{B}$ and $\mathcal{D}_{n,k} = \mathcal{D}^U_{n,k} \times \mathcal{D}^V_{n,k}$, we have that either $\mathcal{D}_{n,k}^U(U_i) \leq \mu^{1/2}$ or $\mathcal{D}^V_{n,k}(V_i) \leq \mu^{1/2}$. We consider two cases.

First, assume there is $i \in [t]$ such that $\mathcal{D}^{V}_{n,k}(V_i) \leq \mu^{1/2} \leq 1/4$. Then,
\begin{equation}
\Pr_{H \sim \mathcal{D}^V_{n,k}}[(\lceil X_i \rceil \wedge f^\star_i)(H) = 1] \,\geq\, 1 - \Pr[\lceil X_i \rceil(H) = 0]  - \Pr[H \in V_i] \,\geq\, 3/4 - \Pr[\lceil X_i \rceil(H) = 0]. \nonumber 
\end{equation}
The latter probability is $0$ if $|X_i| \leq 1$. Otherwise, it can be upper bounded by
\begin{eqnarray}
\Pr_{\chi \sim V^\chi_{n,k}}[\,|\chi(X_i)| < |X_i|\,\mid\,G(\chi) \in V_{n,k}\,] & \leq & (1 + o(1)) \cdot \sum_{\{a,b\} \in \binom{X_i}{2}} \Pr_{\chi \sim V^\chi_{n,k}}[\chi(a) = \chi(b)] \nonumber \\
(\text{since}~|X_i| \leq \lfloor \sqrt{k} \rfloor) & \leq & (1 + o(1)) \cdot \binom{\lfloor \sqrt{k} \rfloor}{2} \cdot \frac{k-1}{(k-1)^2}. \nonumber
\end{eqnarray}
This shows that item $1$ above holds, using $k \geq 5$ and the previous estimate. 

If there is no $i \in [t]$ satisfying $\mathcal{D}^{V}_{n,k}(V_i) \leq \mu^{1/2}$, by the observation in the first paragraph of this proof we get that  $\mathcal{D}^U_{n,k}(U_i) \leq \mu^{1/2}$ and $\mathcal{D}^V_{n,k}(V_i) > \mu^{1/2}$ for all $i \in [t]$. Recall that the measure of $\mathcal{B}$ is at most $\mu \leq 1/16$. Therefore, it must be the case that $|\bigcup_i U_i|/|U_{n,k}| \leq \mu^{1/2}$, as each $K_B$ in this union contributes at least $\mu^{1/2}$ to the measure of $\mathcal{B}$. Due to our choice of $\mathcal{F}^{\star}$ and the structure of $C$, $C(\vec{x}, \mathcal{F}^\star)$ will accept at most a $(1/4)$-fraction of $U_{n,k}$, and item 2 holds.
\end{proof}

Crucially, Lemma \ref{l:lb-depth2} requires no upper bound on the number of terms appearing in $C$, and this will play a fundamental role in the argument below.

For the rest of the proof, let $D(\vec{x}, \vec{y})$ be a monotone CLO of size $s$ that computes $k$-clique over $A_{n,k}$, and $W_i = (V_i, U_i)$ for $i \leq e$ be its associated pairs, where $e \leq s$. As usual, we set $\mathcal{B} = \bigcup_i U_i \times V_i$. Recall the extra condition on the local oracle gates.\\

\noindent \textbf{Assumption $\mathcal{A}_d$:} If $J \subseteq [e]$ and $|J| > d$, then $\bigcap_{j \in J} U_j = \emptyset$.\\

\blue{We can assume without loss of generality that different oracle variables appearing in the description of the circuit are associated to distinct subsets of $U_{n,k}$. Indeed, due to monotonicity (cf.~Claim \ref{cl:hardF}), we can always take a larger subset of $V_{n,k}$ if different oracle variables are associated to the same subset of $U_{n,k}$. A bit more precisely, if $y_i = y_i[U', V_i]$ and $y_j = y_j[U', V_j]$, we can redefine these local oracles to use the pair $(U', V_i \cup V_j)$. This does not increase the overall locality, and does not change the correctness of the computation. Note that this transformation produces oracle variables associated to the same pair of subsets, but since we use boolean circuits instead of boolean formulas, oracle variables don't need to be repeated in the description of the circuit.}

For $J \subseteq [e]$, we use $D_J(\vec{x})$ to denote the circuit with $y_j$ substituted by $1$ if $j \in J$, and by $0$ otherwise. In particular, each $D_J$ is a monotone circuit in the usual sense, i.e., it does not contain local oracle gates. Moreover, $\mathsf{size}(D_J) \leq \mathsf{size}(D)$.

\begin{lemma}\label{l:simplification}
Under Assumption $\mathcal{A}_d$, for every input graph $G \in A_{n,k}$, 
$$
D(G,\mathcal{F}^\star) = \bigvee_{J \in \binom{[e]}{\leq d}} D_J(G) \wedge f^\star_{(U_J,V_J)}(G),
$$
where $U_J \eqdef \bigcap_{j \in J} U_j$ and $V_J \eqdef \bigcup_{j \in J} V_j$ \emph{(}here an empty intersection is $U_{n,k}$ and an empty union is $\emptyset$, corresponding to the case where $J = \emptyset$\emph{)}. 
\end{lemma}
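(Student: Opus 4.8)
\textbf{Proof plan for Lemma \ref{l:simplification}.}

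The plan is to fix an input graph $G \in A_{n,k}$ and evaluate both sides directly, using Assumption $A_d$ to control which oracle variables can possibly be set to $1$ by the sequence $\mathcal{F}^\star$. Recall that, by definition, $f^\star_i(G) = 1$ if and only if $G \in U_i \cup (V_{n,k} \setminus V_i)$. I would split into the two cases $G \in U_{n,k}$ and $G \in V_{n,k}$, since $\mathcal{F}^\star$ behaves quite differently on positive and negative inputs.

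First consider $G = K_B \in U_{n,k}$. Here $f^\star_i(K_B) = 1$ precisely when $K_B \in U_i$, so the set $J^\star \eqdef \{ i \in [e] : f^\star_i(K_B) = 1\}$ is exactly $\{ i : K_B \in U_i\}$; by Assumption $A_d$ this set has size $\leq d$ (if $|J^\star| > d$ then $K_B \in \bigcap_{i \in J^\star} U_i = \emptyset$, a contradiction). Since $D$ is monotone, $D(K_B, \mathcal{F}^\star)$ equals $D_{J^\star}(K_B)$, the circuit with the oracle variables in $J^\star$ set to $1$ and the rest to $0$. On the right-hand side, for the term indexed by a set $J \in \binom{[e]}{\leq d}$, the factor $f^\star_{(U_J, V_J)}(K_B)$ with $U_J = \bigcap_{j\in J} U_j$, $V_J = \bigcup_{j \in J} V_j$ evaluates to $1$ iff $K_B \in U_J$, i.e.\ iff $K_B \in U_j$ for every $j \in J$, i.e.\ iff $J \subseteq J^\star$. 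Among all $J$ with $f^\star_{(U_J,V_J)}(K_B) = 1$, the term $J = J^\star$ itself appears (using $|J^\star| \leq d$), contributing $D_{J^\star}(K_B)$; and for any other contributing $J \subseteq J^\star$, monotonicity of $D$ gives $D_J(K_B) \leq D_{J^\star}(K_B)$. Hence the disjunction on the right equals $D_{J^\star}(K_B) = D(K_B, \mathcal{F}^\star)$, as desired.

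Now consider $G = H \in V_{n,k}$. Here $f^\star_i(H) = 1$ iff $H \notin V_i$. Let $J^\star \eqdef \{ i : f^\star_i(H) = 1 \} = \{ i : H \notin V_i\}$; this set need not be small, so I cannot directly substitute. Instead I use that $D(H, \mathcal{F}^\star) = D_{J^\star}(H)$ by the definition of $D_{J^\star}$, and then argue both sides are $0$ using correctness of $(D, \mathcal{W})$. Since $(D,\mathcal{W})$ computes $k$-clique and $\mathcal{F}^\star$ is a $\mathcal{W}$-separating sequence (by Claim \ref{c:Fmonotone}), $D(H,\mathcal{F}^\star) = 0$. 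For the right-hand side: the term indexed by $\emptyset$ contributes $D_\emptyset(H) \cdot f^\star_{(U_{n,k},\emptyset)}(H) = D_\emptyset(H)$, and each term indexed by nonempty $J$ contributes $D_J(H) \cdot f^\star_{(U_J, V_J)}(H)$; now $f^\star_{(U_J,V_J)}(H) = 1$ iff $H \notin V_J = \bigcup_{j\in J}V_j$, i.e.\ iff $H \notin V_j$ for all $j \in J$, i.e.\ iff $J \subseteq J^\star$. So the right-hand side equals $\bigvee_{J \subseteq J^\star,\ |J| \leq d} D_J(H)$. By monotonicity of $D$, each such $D_J(H) \leq D_{J^\star}(H) = D(H, \mathcal{F}^\star) = 0$, so the right-hand side is $0$ too. (One should double-check the edge behavior with respect to the ``clique indicator'' normalization and the fact that $D$ is a monotone circuit in the variables $\vec{x}$ after substitution, but this is immediate from monotonicity of $D$ in $\vec{y}$ together with $\mathbf{1}_J \preceq \mathbf{1}_{J^\star}$ coordinatewise whenever $J \subseteq J^\star$.)

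The only genuinely delicate point is the asymmetry between the two cases: on positive inputs Assumption $A_d$ is what makes the small-$J$ truncation lossless, whereas on negative inputs it is correctness of the CLO pair that collapses everything to $0$, and one must be careful that the right-hand side does not accidentally accept some $H$ — this is handled by noting every contributing term there is dominated by $D(H,\mathcal{F}^\star)$ via monotonicity. No single step is hard; the main obstacle is just bookkeeping the definition of $f^\star$ for the combined/intersected rectangles $(U_J, V_J)$, which is exactly Part 1 of Lemma \ref{l:lemmaK} applied inductively over $j \in J$.
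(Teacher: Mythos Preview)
Your argument is correct. It differs from the paper's proof mainly in presentation: the paper works syntactically, starting from the Shannon expansion
\[
D(\vec{x},\mathcal{F}^\star)\;=\;\bigvee_{J\subseteq [e]}\Big(D_J(\vec{x})\wedge\bigwedge_{j\in J}f^\star_j(\vec{x})\wedge\bigwedge_{j\notin J}\neg f^\star_j(\vec{x})\Big),
\]
then drops the negated literals (using monotonicity of $D$ together with correctness on $V_{n,k}$), collapses the conjunctions via Lemma~\ref{l:lemmaK}, and finally discards the terms with $|J|>d$ using $A_d$. You instead fix an input, compute the set $J^\star$ of oracles actually set to $1$ by $\mathcal{F}^\star$, and compare both sides directly via $D_J\le D_{J^\star}$ for $J\subseteq J^\star$. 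The underlying ingredients---monotonicity of $D$ in $\vec{y}$, Assumption~$A_d$, and correctness of the CLO pair---are identical in both proofs. Your pointwise route is arguably cleaner and has the small expository benefit of making transparent that $A_d$ is only invoked on positive inputs (to force $|J^\star|\le d$) while correctness is only invoked on negative inputs (to force $D_{J^\star}(H)=0$); in the paper's presentation these roles are somewhat blended across the two truncation steps.
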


\begin{proof}
First, observe that for inputs in $A_{n,k}$,
$$
D(\vec{x}, \mathcal{F}^\star) \equiv \bigvee_{J \subseteq [e]} \Big ( D_J(\vec{x}) \wedge \bigwedge_{j \in J} f^\star_j(\vec{x}) \wedge \bigwedge_{j \notin J} \neg f^\star_j(\vec{x}) \Big )\;,
$$
using our definition of $D_J(\vec{x})$. As we explain below, this circuit is further equivalent to a circuit where we drop the negated part:
$$
D(\vec{x}, \mathcal{F}^\star) \equiv \bigvee_{J \subseteq [e]} \Big ( D_J(\vec{x}) \wedge \bigwedge_{j \in J} f^\star_j(\vec{x}) \Big )\,.
$$
Clearly, by eliminating some ``literals'' we can only accept more inputs. However, by monotonicity the latter is not going to happen. Indeed, if we have a term and a negative input $H \in V_{n,k}$ such that $D_J(H) \wedge \bigwedge_{j \in J} f^\star_j(H) = 1$ but $\bigwedge_{j \notin J} \neg f^\star_j(H) = 0$, then there is a set $J'$ with $J \subseteq J' \subseteq [e]$ such that $D_{J'}(H) \wedge \bigwedge_{j \in J'} f^\star_j(H) \wedge \bigwedge_{j \notin J'} \neg f^\star_j(H) = 1$, where we have used the monotonicity of $D(\vec{x}, \vec{y})$ in order to claim that $D_{J'}(H) \geq D_J(H)$. This is impossible, since by assumption $D(\vec{x}, \mathcal{F}^\star)$ separates $U_{n,k}$ and $V_{n,k}$.

Using Lemma \ref{l:lemmaK}, we know that $\bigwedge_{j \in J} f^\star_j = f^\star_{(U_J, V_J)}$, for $U_J$ and $V_J$ as in the statement of the lemma. Under assumption $\mathcal{A}_d$, whenever $|J| > d$ we get $U_J = \emptyset$. Therefore,
\begin{equation} \label{eq:normalform}
D(\vec{x}, \mathcal{F}^\star) \equiv \bigvee_{J \in  \binom{[e]}{\leq d}} \Big ( D_J(\vec{x}) \wedge f^\star_{(U_J,V_J)}(\vec{x}) \Big ) \vee \bigvee_{J \in \binom{[e]}{> d}} \Big ( D_{J}(\vec{x}) \wedge f^\star_{(\emptyset, V_{J})}(\vec{x}) \Big)\;.
\end{equation}

Using the equivalences established above and the correctness of the original circuit, the circuit in (\ref{eq:normalform}) accepts every input in $U_{n,k}$, and rejects every input in $V_{n,k}$. Now observe that the right-hand terms of the circuit cannot accept an input in $V_{n,k}$, due to the presence of the functions $f^\star_{(\emptyset, V_J)}$. Thus such terms can be discarded, and the circuit obtained after this simplification still accepts $U_{n,k}$ and rejects $V_{n,k}$. This completes the proof of the lemma.
\end{proof}

Observe that $U_J \times V_J \subseteq \mathcal{B}$ for every $J \subseteq [e]$, due to Lemma \ref{l:lemmaK}. In particular, the simplification above is well-behaved with respect to the new oracle rectangles introduced in the transformation.

The next steps of our argument rely on results from Alon and Boppana \citep{DBLP:journals/combinatorica/AlonB87} related to the approximation method \citep{razborov1985lower}. We follow the terminology of the exposition in Boppana and Sipser \citep[Section 4.2]{DBLP:books/el/leeuwen90/BoppanaS90}. For the rest of the proof, we let $\ell \eqdef \lfloor \sqrt{k} \rfloor$, $p \eqdef \lceil 10 \sqrt{k} \log n \rceil$, and $m \eqdef (p - 1)^\ell \cdot \ell!$. (Recall that $\ell$ is the size of each indicator set $\lceil X_i \rceil$, $m$ is the maximum number of indicators in each approximator, and $p$ is an auxiliary parameter.\footnote{Do not confuse this definition of $m$ with the number of edges in the input graph, which will not be needed in the rest of the proof.})

Approximate each individual circuit $D_J(\vec{x})$ as in Boppana-Sipser, obtaining a corresponding depth-$2$ approximator $\widetilde{D}_J(\vec{x})$. Since each $D_J(\vec{x})$ is a monotone circuit of size at most $s$, our choice of $U_{n,k}$ and $V_{n,k}$ and the argument in \citep{DBLP:books/el/leeuwen90/BoppanaS90} provide the following bounds.

\begin{lemma}\emph{\citep[Lemma 4.3]{DBLP:books/el/leeuwen90/BoppanaS90}.}\label{l:l1}
For each $J \subseteq [e]$, the number of positive test graphs $G \in U_{n,k}$ for which $D_J(G) \leq \widetilde{D}_J(G)$ does not hold is at most $E^+ \eqdef s \cdot m^2 \cdot \binom{n - \ell - 1}{k - \ell - 1}$.
\end{lemma}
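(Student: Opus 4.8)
Lemma~\ref{l:l1} is a restatement of the standard positive-error bound in Razborov's approximation method as analyzed by Alon and Boppana \citep{DBLP:journals/combinatorica/AlonB87, razborov1985lower}, so the plan is to reproduce that analysis with the parameters $\ell = \lfloor\sqrt{k}\rfloor$, $p = \lceil 10\sqrt{k}\log n\rceil$, $m = (p-1)^\ell\cdot\ell!$ fixed above. First I would recall the approximator model of \citep[Section~4.2]{DBLP:books/el/leeuwen90/BoppanaS90}: an \emph{approximator} is a disjunction $\bigvee_i \lceil X_i\rceil$ of at most $m$ clique indicators with each $|X_i|\le\ell$, and one builds an approximator $\widetilde D_J(\vec x)$ for the monotone circuit $D_J(\vec x)$ by processing its gates in topological order, replacing each $\vee$ by the approximate OR $\sqcup$ and each $\wedge$ by the approximate AND $\sqcap$, reading each input literal $x_{\{u,v\}}$ as the indicator $\lceil\{u,v\}\rceil$ and the constants as $\lceil\emptyset\rceil = 1$ and the empty disjunction $=0$. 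The heart of the argument is then to track, gate by gate, the set of positive test graphs $K_B$ with $B\in\binom{[n]}{k}$ on which an approximation step can turn a $1$ into a $0$, i.e.\ where the inequality $D_J(K_B)\le\widetilde D_J(K_B)$ can fail.

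The first claim is that OR gates introduce no positive error. The approximate OR of two approximators first takes the (exact) union of their indicators, and then, as long as more than $m$ indicators remain, repeatedly applies the Erd\H{o}s--Rado sunflower lemma to extract a $p$-petal sunflower with core $Z$ and ``plucks'' it, replacing the $p$ petals $\lceil X_{i_1}\rceil,\dots,\lceil X_{i_p}\rceil$ by the single indicator $\lceil Z\rceil$. Since $Z\subseteq X_{i_r}$ for each petal, $\lceil Z\rceil$ accepts every graph accepted by a petal, so plucking never rejects a previously accepted graph; and $m = (p-1)^\ell\,\ell!$ is exactly the Erd\H{o}s--Rado threshold ensuring that any family of more than $m$ sets of size $\le\ell$ contains a $p$-petal sunflower, so the process terminates with $\le m$ indicators. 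Hence $\sqcup$ contributes no false negatives.

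The second claim isolates the positive error at AND gates. Using the distributive identity $\big(\bigvee_i\lceil X_i\rceil\big)\wedge\big(\bigvee_j\lceil Y_j\rceil\big) = \bigvee_{i,j}\lceil X_i\cup Y_j\rceil$, valid over all graphs, the approximate AND forms all products $\lceil X_i\cup Y_j\rceil$, \emph{discards} those with $|X_i\cup Y_j| > \ell$, and then plucks down to $\le m$ indicators. Plucking again adds no positive error, so the only false negatives come from the discarding step. If a term $\lceil T\rceil$ with $T = X_i\cup Y_j$, $|T|\ge\ell+1$, is discarded, the positive test graphs it stops accepting are precisely the $K_B$ with $T\subseteq B$, of which there are $\binom{n-|T|}{k-|T|}\le\binom{n-\ell-1}{k-\ell-1}$. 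Each AND gate discards at most $m\cdot m = m^2$ such terms, since its two input approximators have at most $m$ indicators each, and $D_J$ has at most $\mathsf{size}(D_J)\le s$ gates; summing the per-gate errors gives that the number of positive test graphs on which $D_J(G)\le\widetilde D_J(G)$ fails is at most $s\cdot m^2\cdot\binom{n-\ell-1}{k-\ell-1} = E^+$, uniformly over $J\subseteq[e]$, which is the claim.

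The argument has no real obstacle beyond bookkeeping; the one point I would verify with care is that the parameters $\ell$, $p$, $m$ chosen here fit the hypotheses of both the sunflower lemma and the Boppana--Sipser error analysis (that $\ell\le k$, that $m = (p-1)^\ell\,\ell!$ is exactly the sunflower threshold for sets of size $\le\ell$, and that $m$ remains sub-exponential so the later counting lemmas are nontrivial), and that the base cases — input literals and constants, which are genuine clique indicators $\lceil\{u,v\}\rceil$, $\lceil\emptyset\rceil$ and the empty disjunction — contribute no error. Everything else is the verbatim computation of \citep[Section~4.2]{DBLP:books/el/leeuwen90/BoppanaS90}.
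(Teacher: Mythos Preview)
Your proposal is correct and is exactly the standard Alon--Boppana/Boppana--Sipser argument; the paper itself does not give a proof of this lemma but simply cites \citep[Lemma 4.3]{DBLP:books/el/leeuwen90/BoppanaS90}, relying on the fact that each $D_J$ is an ordinary monotone circuit of size at most $s$ over the same positive and negative test inputs. Your reconstruction of the sunflower-plucking approximation, with positive error arising only from discarding oversized conjunction terms at the $\sqcap$ gates, is precisely that cited argument.
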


\vspace{0.1cm}

\begin{lemma}\emph{\citep[Lemma 4.4]{DBLP:books/el/leeuwen90/BoppanaS90}.}\label{l:l2}
For each $J \subseteq [e]$, the number of negative test graphs \emph{(}colorings\emph{)} $\chi \in V^\chi_{n,k}$ for which $D_J(G(\chi)) \geq \widetilde{D}_J(G(\chi))$ does not hold is at most $E^- \eqdef s \cdot m^2 \cdot [\binom{l}{2}/(k-1)]^p \cdot (k-1)^n$.
\end{lemma}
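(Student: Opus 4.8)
The plan is to recognize this as the clique-specific negative-error bound of the approximation method, i.e.\ the exact analogue of \citep[Lemma 4.4]{DBLP:books/el/leeuwen90/BoppanaS90} applied to the ordinary monotone circuit $D_J$, whose size is at most $s$, and to rerun that argument. Recall from \citep{razborov1985lower, DBLP:journals/combinatorica/AlonB87, DBLP:books/el/leeuwen90/BoppanaS90} that an \emph{approximator} is a disjunction of at most $m$ clique indicators $\lceil X_i \rceil$ with each $|X_i| \le \ell$, that $m = (p-1)^\ell \cdot \ell!$ is exactly the sunflower-lemma threshold for $\ell$-element sets with $p$ petals, and that the approximator $\widetilde{D}_J$ is produced gate-by-gate bottom up: an input $x_{\{a,b\}}$ becomes $\lceil\{a,b\}\rceil$; a $\vee$-gate becomes the concatenation of the two lists of indicators, followed by repeated \emph{sunflower plucking} (replace the $p$ petals of a sunflower by their common core) until at most $m$ terms remain; an $\wedge$-gate becomes the list of the $\le m^2$ products $\lceil X_i\rceil \wedge \lceil Y_j\rceil$, with $\lceil X_i \rceil \wedge \lceil Y_j\rceil$ rewritten as $\lceil X_i\cup Y_j\rceil$ when $|X_i\cup Y_j|\le\ell$ and as $0$ otherwise, again followed by sunflower plucking down to $\le m$ terms.

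First I would isolate the only operation that can turn a rejected negative test into an accepted one. On a complete $(k-1)$-partite graph $G(\chi)$ one has $\lceil X\rceil(G(\chi)) = 1$ iff $\chi$ is injective on $X$; hence the concatenation step at a $\vee$-gate is exact, and at an $\wedge$-gate both rewrites $\lceil X_i\rceil\wedge\lceil Y_j\rceil \mapsto \lceil X_i\cup Y_j\rceil$ and $\mapsto 0$ can only make the approximator accept \emph{fewer} graphs $G(\chi)$ (injectivity on $X_i\cup Y_j$ implies injectivity on $X_i$ and on $Y_j$). So the only source of new acceptances of negative tests is a sunflower plucking, and a plucking with petals $X_{i_1},\dots,X_{i_p}$ and common core $Z$ newly accepts $G(\chi)$ exactly for those $\chi$ that are injective on $Z$ but not injective on any petal $X_{i_t}$.

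The heart of the argument is to bound, for a single plucking, the number of colorings it can hurt. I would condition on the restriction $\chi|_Z$: on the event that $\chi$ is injective on $Z$, the sets $P_t \eqdef X_{i_t}\setminus Z$ are pairwise disjoint (this is the defining property of a sunflower), so the events ``$\chi$ is not injective on $X_{i_t}$'' --- each of which, given injectivity on $Z$, is the event of a monochromatic pair meeting $P_t$ --- depend on disjoint blocks of coordinates and are therefore mutually independent given $\chi|_Z$. Each of them has conditional probability at most $\binom{\ell}{2}/(k-1)$, by a union bound over the at most $\binom{\ell}{2}$ pairs inside $X_{i_t}$, each monochromatic with probability at most $1/(k-1)$. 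Multiplying over the $p$ petals, the probability over a uniform $\chi\sim V^\chi_{n,k}$ that a fixed plucking hurts $\chi$ is at most $\big[\binom{\ell}{2}/(k-1)\big]^p$, so at most $\big[\binom{\ell}{2}/(k-1)\big]^p\,(k-1)^n$ colorings are hurt by it.

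Finally I would count pluckings and apply the standard additive error accounting of the approximation method. Each of the $\le s$ gates of $D_J$ triggers one approximation step, which performs at most $m^2$ sunflower pluckings (an $\wedge$-gate starts from at most $m^2$ product terms and a $\vee$-gate from at most $2m$, and each plucking removes $p-1\ge 1$ terms), so there are at most $s\cdot m^2$ pluckings overall; by monotonicity the set of colorings $\chi$ with $\widetilde{D}_J(G(\chi)) = 1 > 0 = D_J(G(\chi))$ is contained in the union, over all pluckings, of the sets of colorings each one hurts (the usual ``errors accumulate'' bookkeeping, carried out in \citep{DBLP:books/el/leeuwen90/BoppanaS90}). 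A union bound then gives the claimed count $s\cdot m^2\cdot\big[\binom{\ell}{2}/(k-1)\big]^p\cdot(k-1)^n = E^-$. The only points specific to our setting are that the negative instances are precisely the graphs $G(\chi)$, $\chi\in V^\chi_{n,k}$, and the positive instances the $k$-cliques in $U_{n,k}$, so that the Alon--Boppana/Boppana--Sipser approximation scheme applies verbatim, and that each $D_J$ is an ordinary monotone circuit of size $\le s$ (by Lemma \ref{l:simplification} and $\mathsf{size}(D_J)\le \mathsf{size}(D)$); neither requires new work. I expect the only genuinely delicate step to be the conditional-independence estimate of the third paragraph, which is exactly where the sunflower structure of the plucked family is used.
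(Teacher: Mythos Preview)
Your reconstruction is correct and follows exactly the Alon--Boppana/Boppana--Sipser argument that the paper invokes; note that the paper does not give its own proof of this lemma but simply cites \cite[Lemma 4.4]{DBLP:books/el/leeuwen90/BoppanaS90} as a black box, relying only on the fact that each $D_J$ is an ordinary monotone circuit of size at most $s$. Your identification of sunflower plucking as the sole source of new negative acceptances, the conditional-independence estimate over the disjoint petal complements $P_t$, and the $s\cdot m^2$ count of pluckings are all the standard steps of that proof, so there is nothing to compare.
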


Now define using $D$ and the individual approximators $\widetilde{D}_J$ a corresponding monotone circuit $\widetilde{D}(\vec{x}, \vec{y})$ with access to the functions $f^\star_{(U_J, V_J)}$: 
\begin{equation}\label{eq:approx}
\widetilde{D}(\vec{x},\mathcal{F}^\star) \eqdef \bigvee_{J \in \binom{[e]}{\leq d}} \big ( \widetilde{D}_J(\vec{x}) \wedge f^\star_{(U_J,V_J)}(\vec{x}) \big ).
\end{equation}
Clearly, $D(G, \mathcal{F}^\star) \neq \widetilde{D}(G, \mathcal{F}^\star)$ on an input $G \in A_{n,k}$ only if for some approximator $\widetilde{D}_J$ we have $\widetilde{D}_J(G) \neq D_J(G)$. Furthermore, at most $\sum_{j = 0}^d \binom{e}{j} \leq (e+1)^d \leq (s + 1)^d$ distinct circuits $D_J$ are approximated. Combining this with Lemmas \ref{l:l1} and \ref{l:l2}, a union bound, and the fact that the original circuit is correct on every input graph in $A_{n,k}$, we get:
\begin{equation}
\Pr_{G \sim \mathcal{D}^U_{n,k}}[\widetilde{D}(G,\mathcal{F}^\star) = 1]  \;\geq\;  1 - (s + 1)^d \cdot \frac{E^+}{\binom{n}{k}}, \nonumber
\end{equation}
and similarly,
\begin{eqnarray}
\Pr_{H \sim \mathcal{D}^V_{n,k}}[\widetilde{D}(H,\mathcal{F}^\star) = 0] & \geq & (1 - o(1)) \cdot \Pr_{\chi \sim V^\chi_{n,k}}[\widetilde{D}(G(\chi),\mathcal{F}^\star) = 0 \wedge G(\chi) \in V_{n,k}] \nonumber \\
& \geq & (1 - o(1))\cdot \big ( 1 - \Pr_{\chi}[\widetilde{D}(G(\chi),\mathcal{F}^\star) = 1] - o(1) \big ) \nonumber \\
& \geq & (1 - o(1)) \cdot \Big (1 - (s + 1)^d \cdot \frac{E^-}{(k-1)^n} \Big ). \nonumber 
\end{eqnarray}

We can assume each one of these probabilities $\to 1$ as $n \to \infty$, since otherwise we get that $s \geq n^{\Omega(\sqrt{k}/d)}$ using the values of $E^-$, $E^+$, $p$, $\ell$, and $m$, completing the proof of Theorem \ref{t:thm2}. In more detail, let $\delta > 0$ be an arbitrary small constant, and suppose that:
$$
(s+1)^d \cdot \frac{s \cdot m^2 \cdot \binom{n - \ell - 1}{k - \ell - 1}}{\binom{n}{k}} \;\geq\; \delta  \quad \text{or} \quad  (s + 1)^d \cdot \frac{s \cdot m^2 \cdot [\binom{l}{2}/(k-1)]^p \cdot (k-1)^n}{(k-1)^n} \;\geq \; \delta.
$$ 
Due to the upper bound on $k$ in the statement of Theorem \ref{t:thm2}, using estimates entirely analogous to the ones employed in \citep{DBLP:books/el/leeuwen90/BoppanaS90} (which are routine and left to the reader), it follows in each case that:
$$
(s + 1)^{d+1} \;\geq\; n^{\Omega(\sqrt{k})}.
$$
This justifies the claim made above on the convergence of the probabilities.

Now expand each term $\widetilde{D}_J(\vec{x}) \wedge f^\star_{(U_J, V_J)}(\vec{x})$ in $\widetilde{D}(\vec{x}, \mathcal{F}^\star)$ (Equation \ref{eq:approx}), using that (see \citep{DBLP:books/el/leeuwen90/BoppanaS90}) \blue{each circuit $\widetilde{D}_J(\vec{x})$ is either a union of clique indicators of bounded size:}
$$
\widetilde{D}_J(\vec{x}) \equiv \bigvee_{i \in [m_J]} \lceil X^J_i \rceil \;
$$ 
for $m_J \leq m$ and an appropriate choice of sets $X^J_i \subseteq [n]$ satisfying $0 \leq |X^J_i| \leq \ell$, \blue{or $\widetilde{D}_J \equiv 0$.} This produces a circuit equivalent to $\widetilde{D}(\vec{x}, \mathcal{F}^\star)$ over inputs in $A_{n,k}$, and it can be written in the following form:
\begin{equation}\label{eq:finalcircuit}
\widetilde{D}(\vec{x},\mathcal{F}^\star) \equiv \bigvee_{i \in [t]} \Big ( \lceil X_i \rceil \wedge f^\star_{(U'_i,V'_i)}(\vec{x}) \Big )
\end{equation}
Here $t$ can be arbitrarily large, but observe that $U'_i \times V'_i \subseteq \mathcal{B}$ for every $i \in [t]$ (due to Lemmas \ref{l:lemmaK} and \ref{l:simplification}). We don't assume that $(U'_i,V'_i) \neq (U'_{i'}, V'_{i'})$ when $i \neq i'$, and similarly for $X_i$ and $X_{i'}$.

Finally, we know that the circuit in Equation \ref{eq:finalcircuit} accepts a subset of $U_{n,k}$ of measure $1 - o(1)$, and that it rejects a subset of $V_{n,k}$ of measure $1 - o(1)$. By construction, each clique indicator in the description of $\widetilde{D}$ has size at most $\ell \leq \lfloor \sqrt{k} \rfloor$. Together with $U'_i \times V'_i \subseteq \mathcal{B}$ for every $i \in [t]$ and the upper bound on the locality of $\mathcal{B}$, we get a contradiction to  Lemma \ref{l:lb-depth2}. 

The proof of Theorem \ref{t:thm2} is complete. Observe that, under the same assumptions, it is possible to obtain a slightly stronger trade-off of the form: $e^d \cdot s \;\geq\; n^{\Omega(\sqrt{k})}$.

\section{Concluding remarks}\label{s:concluding}

We discuss below some questions and directions motivated by our results, and elaborate a bit more on the connection to proof complexity.\\

\noindent \textbf{Monotone circuit complexity.} The main open problem in the context of circuit complexity is to understand the  size of monotone circuits of small locality separating the sets $U_{n,k}$ and $V_{n,k}$, under no further assumption on the $y$-variables. It is not clear if the hypothesis $\mathcal{A}_d$ in Theorem \ref{t:thm2} is  an artifact of our proof. As far as we know, it is conceivable that smaller circuits can be designed by increasing the overlap between the sets $U_i$.\footnote{We notice that non-monotone polynomial size circuits containing oracles of small locality can compute any boolean function (see \citep[Section 3]{1611.08680}). A similar phenomenon appears in the adaptation of real-valued monotone circuits to general real-valued circuits \citep[Section 7]{pudlak1997lower}, but in that case strong lower bounds are known against monotone real-valued circuits.} 

However, if one is more inclined to lower bounds, we mention that the fusion approach described in \cite{DBLP:conf/coco/Karchmer93} can be easily adapted to monotone circuit with local oracles, and that this point of view might be helpful in future investigations of unrestricted monotone CLOs. 

Another question of combinatorial interest is whether the phase transitions observed in Theorem \ref{t:thm1} extend to more expressive classes of monotone circuits beyond depth two. More broadly, are the phase transitions observed here particular to $k$-clique, or an instance of a more general phenomenon connected to computations using monotone circuits extended with oracle gates?

Corollary \ref{c:tight} suggests the following problem. Is it possible to refine the approach from  \citep{DBLP:journals/combinatorica/AlonB87}, and to prove that the monotone circuit size complexity of $k$-clique is $n^{\Omega(k)}$ for a larger range of $k$? In a related direction, it would be interesting to understand if monotone CLOs can shed light into the difficulties in proving stronger monotone circuit size lower bounds for other boolean functions of interest, such as the matching problem on graphs (see e.g.~\citep[Section 5]{DBLP:journals/combinatorica/AlonB87} and \citep[Section 9.11]{DBLP:books/daglib/0028687}).\\

\noindent \textbf{Proof complexity.} Back to the original motivation from proof complexity, we have been unable so far to transform proofs in R$(\mathsf{Lin}/\mathbb{F}_2)$ into monotone CLOs satisfying $\mathcal{A}_d$, for $d \leq k^{1/2 - \varepsilon}$, or certain variations of $\mathcal{A}_d$ under which Theorem \ref{t:thm2} still holds. Observe that, using the connections established in \citep{1611.08680}, this would be sufficient for exponential lower bounds on proof size.   

The reduction from randomized feasible interpolation
actually provides a distribution on monotone CLOs $C_r$
with a common bound on their sizes such that each
is correct and they satisfy:
$$
\Pr_r [ (u,v) \in \mathcal{B}_r ] \le \mu ~\; \text{for every fixed pair~} (u,v) \in U \times V,
$$
where $\mathcal{B}_r$ is the union of the oracle rectangles in $C_r$.
An averaging argument then yields a fixed monotone CLO whose locality is bounded by $\mu$. One might lose
some information useful for a lower bound in this last step depending on the choice of the distribution $\mathcal{D}$ supported over $U \times V$.

Even though our initial attempts at establishing new length-of-proofs lower bounds have been unsuccessful, we feel that in order to prove limitations for R$(\mathsf{Lin}/\mathbb{F}_2)$ and for other proof systems via randomized feasible interpolation it should be sufficient to establish lower bounds against monotone CLOs under an appropriate assumption on the oracle gates. (In particular, the existence of monotone CLOs of small size and small locality separating $U_{n,k}$ and $V_{n,k}$ does not imply that the approach presented in \citep{1611.08680} is fruitless.) For instance, while $\mathcal{A}_d$ is a semantic condition on the (unstructured) sets $U_i$ and $V_i$, one can try to  explore the syntactic information obtained on these sets from a given proof, such as upper bounds on the circuit complexity of separating each pair $U_i$ and $V_i$, or other related structural information.\\

\noindent \textbf{Acknowledgements.} We would like to thank Pavel Pudl\'{a}k for discussions on monotone circuits with local oracles and proof complexity. The second author would like to thank Michal Garl\'{i}k for several related conversations. This work was supported by the European Research Council under the European Union’s Seventh Framework Programme (FP7/2007-2013)/ERC Grant No.~615075.

\bibliographystyle{alpha}	
\bibliography{refs}

\vspace{0.7cm}
\noindent \textbf{Mailing address:}

Department of Algebra

Faculty of Mathematics and Physics

Charles University

Sokolovsk\'{a} 83, Prague 8, CZ -- 186 75

The Czech Republic

\end{document}